\title{Echo-CGC: A Communication-Efficient Byzantine-tolerant Distributed Machine Learning Algorithm in Single-Hop Radio Network}
\titlerunning{Echo-CGC: A Communication-Efficient Byzantine DML in Single-Hop Radio Network}
\author{Qinzi Zhang}{Boston College, USA}{zhangbcu@bc.edu}{}{}
\author{Lewis Tseng}{Boston College, USA}{lewis.tseng@bc.edu}{https://orcid.org/0000-0002-4717-4038}{}
\authorrunning{Q. Zhang and L. Tseng}
\keywords{Distributed Machine Learning, Single-hop Radio Network, Byzantine Fault, Communication Complexity, Wireless Communication, Parameter Server}
\theoremstyle{plain}
\newtheorem{assumption}{Assumption}
\newcommand{\C}{\mathcal{C}}
\newcommand{\h}{\mathcal{H}}
\newcommand{\B}{\mathcal{B}}
\newcommand{\NN}{\mathbb{N}}
\newcommand{\RR}{\mathbb{R}}
\newcommand{\EE}{\mathbb{E}}
\DeclareMathOperator{\prob}{Pr}
\DeclareMathOperator*{\argmin}{argmin}
\newcommand{\norm}[1]{\lVert #1 \rVert}
\newcommand{\dotprod}[2]{\left< #1 , #2 \right>}
\begin{document}

\maketitle

\begin{abstract}


In the past few years, many Byzantine-tolerant distributed machine learning (DML) algorithms have been proposed in the point-to-point communication model. In this paper, we focus on a popular DML framework -- the parameter server computation paradigm and iterative learning algorithms that proceed in rounds, e.g., \cite{vaidya-2f-redundancy, Rachid_genuine_BML,Su_BGD}.  One limitation of prior algorithms in this domain is the \textit{high communication complexity}. All the Byzantine-tolerant DML algorithms that we are aware of need to send $n$ $d$-dimensional vectors from worker nodes to the parameter server in each round, where $n$ is the number of workers and $d$ is the number of dimensions of the feature space (which may be in the order of millions).
In a wireless network, power consumption is proportional to the number of bits transmitted.
Consequently, it is extremely difficult, if not impossible, to deploy these algorithms in power-limited wireless devices. 
Motivated by this observation, we aim to reduce the \textit{communication complexity} of Byzantine-tolerant DML algorithms in the \textit{single-hop radio network} \cite{Broadcast_SPAA10,Broadcast_Nitin_PODC05,Broadcast_Koo_PODC04}.

Inspired by the CGC filter developed by Gupta and Vaidya, PODC 2020 \cite{vaidya-2f-redundancy}, we propose a gradient descent-based algorithm, Echo-CGC. Our main novelty is a mechanism to utilize the \textit{broadcast properties} of the radio network to avoid transmitting the raw gradients (full $d$-dimensional vectors). In the radio network, each worker is able to overhear previous gradients that were transmitted to the parameter server.
Roughly speaking, in Echo-CGC,
if a worker ``agrees'' with a combination of prior gradients, it will broadcast the ``echo message'' instead of the its raw local gradient. The echo message contains a vector of coefficients (of size at most $n$) and the ratio of the magnitude between two gradients (a float). In comparison, the traditional approaches need to send $n$ local gradients in each round, where each gradient is typically a vector in a ultra-high dimensional space ($d \gg n$).
The improvement on communication complexity of our algorithm depends on multiple factors, including number of nodes, number of faulty workers in an execution, and the cost function. We numerically analyze the improvement, and show that with a large number of nodes, Echo-CGC reduces 80\% of the communication under standard assumptions.
\end{abstract}


\section{Introduction}

Machine learning has been widely adopted and explored recently \cite{ACM_Survey_DML_2020,ACM_Survey_DL_2020}. Due to the exponential growth of datasets and computation power required, distributed machine learning (DML) becomes a necessity. There is also an emerging trend \cite{IEEE_Survey_ML_Wireless,Wireless_FL_Chiang_2020} to apply DML in power-limited wireless networked systems, e.g., sensor networks, distributed robots, smart homes, and Industrial Internet-of-Things (IIoT), etc. In these applications, the devices are usually small and fragile, and susceptible to malicious attacks and/or malfunction. More importantly, it is necessary to reduce communication complexity so that (over-)communication does not drain the device battery. Most prior research on fault-tolerant DML (e.g., \cite{Rachid_genuine_BML,Rachid_Krum,vaidya-2f-redundancy,Su_BGD}) has focused on the use cases in clusters or datacenters. These algorithms achieve high resilience (number of faults tolerated), but also incur \textit{high communication complexity}. 
As a result, most prior Byzantine-tolerant DML algorithms are extremely difficult, if not impossible, to be deployed in power-limited wireless networks.

Motivated by our observations, we aim to design a Byzantine DML algorithm with reduced communication complexity. We consider wireless systems that are modeled as a \textit{single-hop radio network}, and focus on the popular parameter server computation paradigm (e.g., \cite{vaidya-2f-redundancy, Rachid_genuine_BML,Su_BGD}).
We propose \textit{Echo-CGC}, and prove its correctness under typical assumptions \cite{Rachid_Krum, Rachid_genuine_BML}. For the communication complexity, we formally analyze the expected number of bits that need to be sent from workers to the parameter server. 
The extension to multi-hop radio network is left as an interesting future work.

\textbf{Recent Development in Distributed Machine Learning}~~
Distributed Machine Learning (DML) is designed to handle a large amount of computation over big data. 
In the parameter server model, there is a centralized parameter server that distributes the computation tasks to $n$ workers. These workers have the access to the same dataset (that may be stored externally). Similar to \cite{Rachid_Krum,vaidya-2f-redundancy,Su_BGD}, we focus on the \textit{synchronous gradient descent} DML algorithms, where the server and workers proceed in synchronous rounds. In each round, each worker computes a local gradient over the parameter received from the server, and the server then aggregates the gradients collected from workers, and updates the parameter. Under suitable assumptions, prior algorithms \cite{Rachid_Krum,vaidya-2f-redundancy,Su_BGD} converge to the optimal point in the $d$-dimensional space $\RR^d$ even if up to $f$ workers may become Byzantine faulty.

To our knowledge, most Byzantine-tolerant DML or distributed optimization algorithms focused on the case of clusters and datacenters, which are modeled as a point-to-point network. For example, Reference \cite{Su_BGD}, Krum \cite{Rachid_Krum}, Kardam \cite{Rachid_Kardam}, and ByzSGD \cite{Rachid_genuine_BML} focused on the stochastic gradient descent algorithms under several different settings (synchronous, asynchronous, and distributed parameter server). Reference \cite{vaidya_nonBayesian_learning,vaidya-2f-redundancy,vaidya_optimization_PODC16} focused on the gradient descent algorithms for the general distributed optimization framework. Zeno \cite{Xie_Zeno} uses failure detection to improve the resilience. None of these works aimed to reduce communication complexity. 

Another closely related research direction is on reducing the communication complexity of non-Byzantine-tolerant DML algorithms, e.g., \cite{NIPS_Li_2014,Wireless_FL_Chiang_2020,HotEdge_Tao_2018}.
These algorithms are \textit{not} Byzantine fault-tolerant, and adopt a completely different design. For example, reference \cite{NIPS_Li_2014} utilizes relaxed consistency (of the underlying shared data), reference \cite{HotEdge_Tao_2018} discards coordinates (of the local gradients) aggressively, and reference \cite{Wireless_FL_Chiang_2020} uses intermediate aggregation. It is not clear how to integrate these techniques with Byzantine fault-tolerance, as these approaches reduce the redundancy, making it difficult to mask the impact from Byzantine workers.

\textbf{Single-Hop Radio Network}~~
We consider the problem in a single-hop radio network, which is a proper theoretical model for wireless networks. Following \cite{Broadcast_SPAA10,Broadcast_Nitin_PODC05,Broadcast_Koo_PODC04}, we assume that single-hop wireless communication is reliable and authenticated, and there is no jamming nor spoofing. Moreover, nodes follow a specific TDMA schedule so that there is no collision. 
In Section \ref{s:model}, we briefly argue why such an assumption is realistic to model wireless communication. In the single-hop radio network model, we aim to \underline{minimize the total number of bits} to be transmitted in each  round. If we directly adapt prior gradient descent-based algorithms \cite{Rachid_Krum,vaidya-2f-redundancy} to the radio network model, then each worker needs to broadcast a vector of size $d$, where $d$ is the number of dimensions of the feature space. In practical applications (e.g., \cite{Medical_ML,Wireless_FL_Chiang_2020}), $d$ might be in the order of millions, and the gradients may require a few GBs.  Since power consumption is proportional to the communication complexity in wireless channel, prior Byzantine DML algorithms  are \textit{not} adequate for power-limited wireless networks..

\textbf{Main Contributions}~~
Inspired by the CGC filter developed by Gupta and Vaidya, PODC 2020 \cite{vaidya-2f-redundancy}, we propose a gradient descent-based algorithm, \textit{Echo-CGC}, for the parameter server model in the single-hop radio network.
Our main observation is that since workers can overhear gradients transmitted earlier, they can use this information to avoid sending the raw gradients in some cases.
Particularly, 
if a worker ``agrees'' with some reference gradient(s) transmitted earlier in the same round, then they send a small message to ``echo'' with the reference gradient(s). The size of the echo message ($O(n)$ bits) is negligible compared to the raw gradient ($O(d)$ bits), since in typical ML applications, $d \gg n$.

Our proof is more sophisticated than the one in \cite{vaidya-2f-redundancy}, even though Echo-CGC is inspired by the CGC filter. The reason is that the ``echo message'' does \textit{not} necessarily contain worker $i$'s local gradient; instead, it can be used to construct an approximate gradient, which intuitively equals a combined gradients between $i$'s local gradients and the gradients broadcast by previous workers. We need to ensure that such an approximation does not affect the aggregation at the server. Moreover, CGC filter \cite{vaidya-2f-redundancy} works on deterministic gradients -- each worker computes the gradient of its local cost function using the full dataset. In our case, each worker computes a stochastic gradient, a gradient over a small random data batch. We prove that with appropriate assumptions, Echo-CGC converges to the optimal point. 

Echo-CGC is correct under the same set of assumptions in prior work \cite{Rachid_Krum};
however, there is an inherent trade-off between resilience, the proven bound on the communication complexity reduction, and the cost function. Fix the cost function. We derive necessary conditions on $n$ so that Echo-CGC is guaranteed to perform better. We also perform numerical analysis to understand the trade-off. In general, Echo-CGC saves more and more communication if $f/n$ becomes smaller and smaller. Moreover, our algorithm performs better when the variance of the data is relatively small. For example, our algorithm tolerates 10\% of faulty workers and saves over 75\% of communication cost when standard deviation of computed gradients is less than 10\% of the true gradient. 
\section{Preliminaries}

In this section, we formally define our models, and introduce the assumptions and notations.

\subsection{Models}
\label{s:model}

\textbf{Single-Hop Radio Network}~~
We consider the standard radio network model in the literature, e.g., \cite{Broadcast_SPAA10,Broadcast_Nitin_PODC05,Broadcast_Koo_PODC04}. In particular, the underlying communication layer ensures the \textit{reliable local  broadcast} property \cite{Broadcast_Nitin_PODC05}. In other words, the channel is perfectly reliable, and a local broadcast is correctly received by all neighbors.
As noted in \cite{Broadcast_SPAA10,Broadcast_Nitin_PODC05}, this assumption does not typically hold in the current deployed wireless networks, but it is possible to realize such a property with high probability in practice with the help from the MAC layer \cite{JAM_PODC08} or physical layer \cite{JAM_Infocom07}.

In our system, nodes can be uniquely identified, i.e., each node has a unique identifier. We assume that a faulty node may not spoof another node's identity. The communication network is assumed to be single-hop; that is, each pair of nodes are within the communication range of each other.
Moreover, time is divided into slots, and each node proceeds synchronously. Message collision is not possible because of the nodes follow a pre-determined TDMA schedule that determine the transmitting node in each slot and the transmission protocol is jam-resistant. Each slot is assumed to be large enough so that it is possible for a node to transmit a gradient. 
We also assume that each communication round (or communication step) is divided into $n$ slots, and the TDMA schedule assigns each node to a unique slot. For ease of discussion, node $i$ is scheduled to transmit at slot $i$.





\textbf{Stochastic Gradient Descent and Parameter Server}~~
In this work, we focus on the Byzantine-tolerant distributed Stochastic Gradient Descent (SGD) algorithms, which are popular in the optimization and machine learning literature \cite{Rachid_Krum,Rachid_genuine_BML,vaidya-2f-redundancy,convex-optimization}. Given a cost function $Q$, the (sequential) SGD algorithm outputs an optimal parameter $w^*$ such that
\begin{equation}
    \label{eq-objective}
    w^* = \argmin_{w\in\RR^d} Q(w)
\end{equation}
An SGD algorithm  executes in an iterative fashion, where in each round $t$, the algorithm computes the gradient of the cost function $Q$ at parameter $w^t$ and updates the parameter with the gradient.

\textit{Synchronous Parameter Server Model}~~
Computation of gradients is typically expensive and slow. One popular framework to speed up the computation is the \textit{parameter server model}, in which the parameter server distributes the computation tasks to $n$ workers and aggregates their computed gradients to update the parameter in each round. Following the convention, we will use node and worker interchangeably.

We assume a synchronous system, i.e., the computation and communication delays are bounded, and the server and workers know the bound. Consequently, if the server does not receive a message from worker $i$ by the end of some round, then the server identifies that worker $i$ is faulty.

Formally speaking, a distributed SGD algorithm in the parameter server model proceeds in synchronous rounds, and executes the following three steps in each round $t$:
\begin{enumerate}
    \item The parameter server broadcasts parameter $w^t$ to the workers.
    \item Each worker $j$ randomly chooses a random data batch $\xi_j^t$ from the dataset (shared by all the workers) and computes an estimate, $g_j^t$, of the gradient $\nabla Q(w^t)$ of the cost function $Q$ using $\xi_j^t$ and $w^t$.
    \item The server aggregates estimated gradients from all workers and updates the parameter using the gradient descent approach with step size $\eta$:
    \begin{equation}
    \label{eq:gd-uddate}
    w^{t+1}=w^t-\eta\sum_{j=1}^ng_j^t
    \end{equation}
\end{enumerate}

\textbf{Fault Model and Byzantine SGD}~~
Following \cite{vaidya-2f-redundancy,Rachid_Krum,Su_BGD}, our system consists of $n$ workers, up to $f$ of which might be Byzantine faulty. We assume that the central parameter server is always fault-free. 

Byzantine workers may be controlled by an omniscient adversary which has the knowledge of the current parameter (at the server) and the local gradient of all the other workers, and may have arbitrary behaviors. They \textit{can} send arbitrary messages. However, due to the reliable local broadcast property of the radio network model, they \textit{cannot} send inconsistent messages to the server and other workers. They also \textit{cannot} spoof another node's identity.
Our goal is therefore to design a distributed SGD algorithm that solves Equation (\ref{eq-objective}) in the presence of up to $f$ Byzantine workers.

Workers that are \textit{not} Byzantine faulty are called fault-free workers. These workers follow the algorithm specification faithfully. For a given execution of the algorithm, we denote $\h$ as the set of fault-free workers and $\B$ as the set of Byzantine workers. For brevity, we denote $h=|\h|$ and $b=|\B|$; hence, we have $b\leq f$ and $h\geq n-f$.

\textbf{Communication Complexity}~~
We are interested in minimizing the \underline{total number of bits} that need to be transmitted from workers to the parameter server in \textit{each round}. Prior algorithms \cite{vaidya-2f-redundancy,Rachid_Krum} transmit $n$ gradients in a $d$-dimensional space in each round, since each node needs to transmit its local gradient to the centralized server. Typically, each gradient consists of $d$ floats or doubles (i.e., a single primitive floating point data structure for each dimension). 

\subsection{Assumptions and Notations}
\label{sec:assumptions}

We assume that the cost function $Q$ satisfies some standard properties used in the literature \cite{Rachid_Krum,Rachid_genuine_BML,Su_BGD}, including convexity, differentiability, Lipschitz smoothness, and  strong convexity.
Following the convention, we use $\dotprod{a}{b}$ to represent the dot product of two vectors $a$ and $b$ in the $d$-dimensional space $\RR^d$.
\begin{assumption}[Convexity and smoothness]
    \label{as:convex-diff}
    $Q$ is convex and differentiable.
\end{assumption}
\begin{assumption}[$L$-Lipschitz smoothness]
    \label{as:lipschitz}
    There exists $L>0$ such that
    for all $w,w'\in\RR^d$,
    \begin{equation}
        \norm{\nabla Q(w)-\nabla Q(w')}\leq L\norm{w-w'}
    \end{equation}
\end{assumption}
\begin{assumption}[$\mu$-strong convexity]
    \label{as:strong-convex}
    There exists $\mu>0$ such that for all $w,w'\in\RR^d$,
    \begin{equation}
        \dotprod{\nabla Q(w)-\nabla Q(w')}{w-w'}\geq \mu\norm{w-w'}^2
    \end{equation}
\end{assumption}

We also assume that the random data batches are independently and identically distributed from the dataset. Before stating the assumptions, we formally introduce the concept of randomness in the framework.
Similar to typical stochastic gradient descent algorithms, the only randomness  is due to the random data batches $\xi_j^t$ sampled by each fault-free worker $j\in\h$ in each round $t$, which further makes $g_j^t$ as well as $w^{t+1}$ non-deterministic. In the case when a worker uses the entire dataset to train model, $g_j^t=\nabla Q(w^t)$. Hence, the result is deterministic, i.e., each fault-free worker derives the same gradient. In practice, data batch is a small sample of the entire data set.\footnote{Reference \cite{vaidya-2f-redundancy} works on a different formulation in which each worker may have a different local cost function.}

Formally speaking, we denote an operator $\EE_{\Xi^t}(\cdot\mid w^t, \mathcal{G}_\B^t)$ as the \textit{conditional expectation} operator over the set of random batches $\Xi^t=\{\xi_j^t,j=1,2,\ldots,n\}$ in round $t$ given (i) the parameter $w^t$, and (ii) the set of Byzantine gradients $\mathcal{G}_\B^t=\{g_j^t:j\in\B\}$. This conditional expectation operator allows us to treat $w^t$, $Q(w^t)$, and $\nabla Q(w^t)$ as constants, as well as the Byzantine gradients. This is reasonable because (i) we have the knowledge about $Q$ and $w^t$ given an execution, and (ii) the Byzantine gradients are arbitrary, and do not depend on the data batches. From now on, without further specification, we abbreviate the operator $\EE_{\Xi^t}(\cdot\mid w^t, \mathcal{G}_\B^t)$ as $\EE$.

Below we present two further assumptions of local stochastic gradient $g^t_j$ at each fault-free worker $j$. Similar to \cite{Rachid_Krum, Rachid_genuine_BML}, we rely on the two following assumptions for correctness proof. 

\begin{assumption}[IID Random Batches]
    \label{as:iid}
    For all $j\in\h$ and $t \in \NN$,
    \begin{equation}
        \EE(g_j^t)=\nabla Q(w^t)
    \end{equation}
\end{assumption}
\begin{assumption}[Bounded Variance]
    \label{as:bounded-variance}
    For all $j\in\h$ and $t \in \NN$,
    \begin{equation}
        \EE\norm{g_j^t-\nabla Q(w^t)}^2\leq \sigma^2\norm{\nabla Q(w^t)}^2
    \end{equation}
\end{assumption}

\textbf{Notation}~~
We list the most important notations and constants used in our algorithm and analysis in the following table.

\begin{table}[H]
    \centering
    \begin{tabular}{| l | l |}
        \hline 
        $\h$ & set of fault-free workers; $h=|\h|$ \\
        \hline 
        $\B$ & set of faulty workers; $b=|\B|$ \\
        \hline 
        $t$ & round number, $t=0,1,2,\ldots$ \\
        \hline
        $w^*$ & optimal solution to $Q$, i.e., $w^*=\argmin_{w\in\RR^d} Q(w)$ \\
        \hline 
        $w^t$ & parameter in round $t$ \\
        \hline 
        $g_j^t$ & estimated gradient of $j$ in round $t$ \\
        \hline 
        $\Tilde{g}_j^t$ & “reconstructed" gradient of $j$ by server in round \\
        \hline 
        $\hat{g}_j^t$ & gradient of $j$ in round $t$ after applying the CGC filter \\
        \hline
        $\eta$ & fixed step size as in Equation (\ref{eq:gd-uddate}) \\
        \hline
        $L$ & Lipschitz constant \\
        \hline 
        $\mu$ & strong convexity constant \\
        \hline 
        $r$ & deviation ratio, a key parameter in our algorithm \\
        \hline 
        $k^*$ & constant defined in Lemma \ref{lem:kx}, $k^*\approx1.12$ \\
        \hline
    \end{tabular}
    \caption{Notations and constants used in this paper.}
    \label{tab:notation}
\end{table}
\section{Our Algorithm: Echo-CGC}

Our algorithm is inspired by Gupta and Vaidya \cite{vaidya-2f-redundancy}. Specifically, we  integrate their CGC filter with a novel aggregation phase.
Our aggregation mechanism utilizes the broadcast property of the radio network to improve the communication complexity. In the CGC algorithm  \cite{vaidya-2f-redundancy}, each worker needs to send a $d$-dimensional gradient to the server, whereas in our algorithm, some workers only need to send the ``echo message'' which is of size $O(n)$ bits. Note that in typical machine learning applications, $d \gg n$.

We design our algorithm for the synchronous parameter server model, so the algorithm is presented in an iterative fashion. That is, each worker and the parameter server proceed in synchronous rounds, and the algorithm specifies the exact steps for each round $t$. Our Algorithm, Echo-CGC, is presented in Algorithm \ref{alg:echo-CGC}.  The algorithm uses the notations and constants summarized in Table \ref{tab:notation}.

\paragraph*{Algorithm Description}

Initially, the parameter server randomly generates an initial parameter $w^0\in\RR^d$. Each round $t \geq 0$ consists of three phases: 
(i) computation phase, (ii) communication phase, and (iii) aggregation phase. Echo-CGC takes the following inputs: step size $\eta$, deviation ratio $r$, number of workers $n$, and maximum number of tolerable faults $f$. The exact requirements on the values of these inputs will become clear later. For example, $n, f, r$ need to satisfy the bound derived in  Lemma \ref{lem:conv-beta}. More discussion will be presented in Section \ref{s:communication}.

\textbf{Computation Phase}~~
In the computation phase of round $t$, the server broadcasts $w^t$ to the workers. Each worker $j$ then computes the local stochastic gradient $g_j^t=\nabla Q_j(w^t)$ using $w^t$ and its random data batch $\xi_j^t$. Since we assume the parameter server is fault-free, each worker receives the identical $w^t$. The local gradient is stochastic, because each worker uses a random data batch to compute the local gradient $g^t_j$.

\textbf{Communication Phase}~~
In the communication phase, each worker needs to send the information regarding to its local gradient to the parameter server. This phase is our main novelty, and different from prior algorithms \cite{vaidya-2f-redundancy,Rachid_Krum,Su_BGD}. We utilize the property of the broadcast channel to reduce the communication complexity. 
As mentioned earlier, the communication phase of round $t$ is divided into $n$ slots $t_1,\ldots,t_n$. Without loss of generality, we assume that each worker $j$ is scheduled to broadcast its information in slot $t_j$ (of round $t$). Note that we assume that the underlying physical or MAC layer is jamming-resistant and reliable; hence, each fault-free worker can reliably broadcast the information to all the other nodes.

\textit{Steps for Worker $j$}:~~
Each worker $j$ stores a set of gradients that it overhears in round $t$. Denote by $R_j$ the set of stored gradients. By assumption, $R_j$ consists of gradients $g_i^t$ for $i < j$, when at the beginning of slot $t_j$. Upon receiving a gradient $g_i^t$ (in the form of a vector in $\RR^d$), worker $j$ stores it to $R_j$ if $g_i^t$ is linearly independent with all existing gradients in $R_j$. 
In the slot $t_j$, worker $j$ computes the ``echo gradient'' 
using vectors stored in $R_j$. Specifically, worker $j$ takes the following steps: 

\begin{itemize}
    \item It expresses $R_j$ as $R_j=\{g_{i_1}^t,\ldots,g_{i_{|R_j|}}^t\}$ and constructs a matrix $A_j\in\RR^{d\times|R_j|}$ as
\begin{equation}
    A_j^t = \begin{bmatrix}
        g_{i_1}^t & g_{i_2}^t & \dotsm & g_{i_{|R_j|}}^t
    \end{bmatrix} \notag
\end{equation}

\item It then computes the Moore-Penrose inverse (M-P inverse in short) of $A_j^t$, defined as
\begin{equation}
    (A_j^t)^+ = ((A^t_j)^TA^t_j)^{-1}(A^t_j)^T, \notag 
\end{equation}
where $A^T$ is the transpose of matrix $A$. The existence of the M-P inverse is guaranteed. Intuitively this is because all columns of $A_j^t$ are linearly independent by construction. The formal proof is presented in Appendix \ref{appendix-MP-inverse}.

\item Next, worker $j$ computes a vector $x_j^t\in\RR^{|R_j|}$ using the M-P inverse:
\begin{equation}
    x_j^t = (A_j^t)^+ g_j^t, \notag
\end{equation}
where $g_j^t$ is the local stochastic gradient of $Q$ computed by $j$ in the computation phase. 
Note that $x^t_j$ is of size $O(n)$, since $R_j$ contains at most $n$ elements.

\item Finally, it computes the ``echo gradient'' as 
\begin{equation}
(g_j^t)^*=A_j^tx_j^t \notag
\end{equation}
Mathematically, $(g_j^t)^*$ is the projection of $g_j^t$ onto the span of vectors in $R_j$, i.e., the closest vector to $g^t_j$ in the span of $R_j$.
 
\end{itemize}

Next, worker $j$ checks whether the following inequality holds where $(g_j^{t})^*$ is the echo gradient, $g_j^t$ the local stochastic gradient, and $r$ the deviation ratio.
\begin{equation}
    \label{eq-sendcheck}
    \norm{(g_j^{t})^*-g_j^t} \leq r\norm{g_j^t}
\end{equation}
Worker $j$ performs one of the two actions depending on the result of Inequality (\ref{eq-sendcheck}).
\begin{itemize}
    \item If Inequality (\ref{eq-sendcheck}) holds, then $j$ sends the \textit{echo message} $(\norm{g_j^t}/(\norm{g_j^{t})^*},x_j^t,I_j^t)$ to the server, where $I_j^t=\{i_1,\ldots,i_{|R_j|}\}$ is 
    a sorted list of worker IDs whose gradients are stored in $R_j$. 
    \item Otherwise, worker $j$ broadcasts the raw gradient $g_j^t$ to server and all the other workers. 
\end{itemize}

\textit{Steps for Parameter Server}:~~
The parameter server uses a vector $G$ to store the  gradients from workers. Specifically, in each round $t$, for each worker $j$, the server computes $\tilde{g}_j^t$ and stores it as the $j$-th element of $G$. 
At the beginning of round $t$, every element $G[j]$ is initialized as an empty placeholder $\perp$.
During the communication phase, the parameter server takes two possible actions upon receiving a message from worker $j$: 

\begin{itemize}
    \item If the message is a vector, then the server stores $\Tilde{g}_j^t=g_j^t$ in $G[j]$.
    \item Otherwise, the message is a tuple $(k,x,I)$. The server then does the following:
    \begin{itemize}
        \item If there exists some $i\in I$ such that $G[i]=\perp$ (i.e., the server has not received a message from worker $i$), then due to the reliable broadcast property, the server can safely identify $j$ as a Byzantine worker. By convention, we let the server store $\Tilde{g}_j^t=\Vec{0}$, the zero vector in $\RR^d$, in $G[j]$. 
        \item Otherwise, denote the matrix $A_I$ as $A_I=
        \begin{bmatrix}
            G[i_1], & \ldots, & G[i_{|R_j|}]
        \end{bmatrix}$ where $I=\{i_1,\ldots,i_{|R_j|}\}$,
        and the server stores $\Tilde{g}_j^t$ as $\Tilde{g}_j^t = kA_{I}x$ in $G[j]$. 
    \end{itemize}
\end{itemize}

\textbf{Aggregation Phase}~~
The final phase is identical to the algorithm in \cite{vaidya-2f-redundancy}, in which the server updates the parameter using the CGC filter. First, the server sorts the stored gradients $G^t$ in the increasing order of their Euclidean norm and relabel the IDs so that 
$\norm{\Tilde{g}_{i_1}^t}\leq \dotsm \leq \norm{\Tilde{g}_{i_n}^t}$. Then the server applies the CGC filter as follows:
\begin{equation}
    \label{eq-CGC}
    \hat{g}_j^t = \left\{
    \begin{array}{ll}
        \frac{\norm{\Tilde{g}_{i_{n-f}}^t}}{\norm{\Tilde{g}_j^t}}\Tilde{g}_j^t, & j\in\{i_{n-f+1},\ldots,i_n\} \\
        \Tilde{g}_j^t, & j\in\{i_1,\ldots,i_{n-f}\} \\
    \end{array}\right.
\end{equation}
Finally, the server aggregates the gradients by $g^t=\sum_{j=1}^n\hat{g}_j^t$ and updates the parameter by $w^{t+1}=w^t-\eta g^t$, where $\eta$ is the fixed step size. 

\begin{algorithm}
\caption{Algorithm Echo-CGC}
\label{alg:echo-CGC}
\begin{algorithmic}[1]
    \State \textbf{Parameters}: 
    \State \hspace{0.2in}$\eta>0$ is the step size defined in Equation (\ref{eq:gd-uddate}) 
    \State \hspace{0.2in}$r>0$ is the deviation ratio
    \State \hspace{0.2in}$n,f,r$ satisfy the resilience bounds stated in Lemma \ref{lem:conv-beta}
    \State \textbf{Initialization at server}:~~ $w^0\gets$ a random vector in $\RR^d$
    \For {$t\gets 0$ to $\infty$}
        \State /* Computation Phase */
            \State At server: ~~broadcast $w^t$ to all workers; ~~$G\gets$ a $\perp$-vector of length $n$
            \State At worker $j$: 
            \State \hspace{0.2in} receive $w^t$ from the server
            \State \hspace{0.2in}  $g_j^t\gets \nabla Q_j(w^t)$; ~~$R_j\gets\{\}$ \Comment{local stochastic gradient at worker $j$}
            \vspace{5pt}
        \State  /* Communication Phase */
            \For {$i\gets 1$ to $n$}
                \State  (i) At worker $i$: 
                
                \If {$|R_i|=0$}
                    \State broadcast $g_i^t$
                \Else
                    \State $A\gets [g]_{g\in R_j}$;~~ $A^+\gets (A^TA)^{-1}A^T$;~~ $x\gets A^+g_i^t$ \Comment{$Ax$ is the echo gradient}
                    \If {$\norm{Ax-g_i^t}\leq r\norm{g_i^t}$}
                        \State $I\gets\{i':{g_{i'}^t\in R_j}\}$ in an ascending order
                        \State broadcast $(\norm{g_i^t}/\norm{Ax},x,I)$ \Comment{echo message}
                    \Else 
                        \State broadcast $g_i^t$ \Comment{raw local gradient}
                    \EndIf 
                \EndIf
                \State  (ii) At worker $j>i$: 
                \If{$j$ receives vector $g_i^t$ from worker $i$}
                    \State $A\gets [g]_{g\in R_j}$;~~ $A^+\gets (A^TA)^{-1}A^T$
                    \If {$g_i^t$ is linearly independent with $R_j$ (i.e., $AA^+g_i^t\neq g_i^t$)}
                    \label{line:linear-independent}
                        \State $R_i\gets R_i\cup\{g_i^t\}$
                    \EndIf
                \EndIf    
                \Statex \hspace{2.7em} (iii) At server:
                \If {it receives a vector $g_j^t$ from worker $j$}
                    
                    \State $G[j] \gets g_j^t$
                    \Comment{$j$ transmitted a raw gradient} 
               \ElsIf {it receives an echo message $(k,x,I)$ from worker $j$}
                    \If {$\exists i\in I$ such that 
                $G[i] = \perp$    }
                        
                        \State $G[j]\gets \Vec{0}$
                        \Comment{$j$ is a Byzantine worker}
                    \Else 
                        \State $A_I\gets [\tilde{g}_i^t]_{i\in I}$,~~ $G[j]\gets kA_Ix$
                        \Comment{$j$ transmitted an echo message}
                    
                    \EndIf
                    
                \EndIf 
            \EndFor 
        \State /* Aggregation Phase (applying CGC filter from \cite{vaidya-2f-redundancy}) */
        \State $g^t\gets \sum_{g\in G} CGC(g)$ \Comment{$CGC(\cdot)$ defined in Equation (\ref{eq-CGC})}
        \State $w^{t+1}\gets w^t-\eta\cdot g^t$ \Comment{$\eta$ defined in Equation (\ref{eq:gd-uddate})}
    \EndFor 
\end{algorithmic}
\end{algorithm}

\section{Convergence Analysis}

In this section, we prove the convergence of our algorithm Echo-CGC. The proof is more complicated than the one in \cite{vaidya-2f-redundancy}, even though both algorithms use the CGC filter. This is mainly due to two reasons: (i) we use stochastic gradient, whereas  \cite{vaidya-2f-redundancy} uses a deterministic gradient; and (ii) echo messages only results in an approximate gradient (i.e., the echo gradient which may be deviated from the local stochastic gradient by a ratio $r$). Intuitively, in addition to the Byzantine tampering, we need to deal with non-determinism from stochastic gradients and noise from echo messages.

\subsection{Convergence Rate Analysis}
\label{s:rho}

In this part, we first analyze the \textit{convergence rate} $\rho$, which is a constant defined later in Equation (\ref{eq-rho}). Recall a few notations that $h=|\h|$ and $b=|\B|$, where given the execution, $\h$ is the set of fault-free workers and $\B$ is the set of Byzantine workers. Also recall that $L$ and $\mu$ are the constants defined in the Assumption \ref{as:lipschitz} and \ref{as:strong-convex}, respectively; $\sigma$ defined in Assumption \ref{as:bounded-variance}; and $r$ is the deviation ratio used in Echo-CGC. To derive $\rho$, we need to define series of constants based on the given parameters of $n, f, h, b, L, \mu, r$, and $\sigma$.

We first define a constant $\beta$ as
\begin{equation}
    \label{eq-beta}
    \beta = (n-2f)\frac{\mu-r(1+\sigma)L}{1+r} - b(1+k_h\sigma)L,
\end{equation}
where $k_x$ is defined as
\begin{equation}
    \label{eq-kn}
    k_x = 1+\frac{x-1}{\sqrt{2x-1}}, \enspace \forall x\geq 1.
\end{equation}
We then define a constant $\gamma$ as
\begin{equation}
    \label{eq-gamma}
    \gamma = nL^2\left(h(1+\sigma^2)+b\alpha_h\right),
\end{equation}
where
\begin{equation}
    \label{eq-alpha}
    \alpha_x = x\sigma^2 + (1+k_h\sigma)^2, \enspace \forall x\geq 1.
\end{equation}
Finally, we define the convergence rate $\rho$ using $\beta$ and $\gamma$ as follows:
\begin{equation}
    \label{eq-rho}
    \rho = 1-2\beta\eta+\gamma \eta^2.
\end{equation}

We will prove that under some standard assumptions,
the convergence rate $\rho$ is in the interval $[0,1)$. 
We first present several auxiliary lemmas. Due to page limit, most proofs are presented in Appendix \ref{app:rho}.

\begin{lemma}
    \label{lem:conv-1}
    Let $L,\mu>0$ be the Lipschitz constant and strong convexity constant defined in Assumption \ref{as:lipschitz} and \ref{as:strong-convex}, respectively. Then we have $\mu\leq L$.
\end{lemma}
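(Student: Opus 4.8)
The plan is to exploit the two structural assumptions on $Q$ directly. I would combine the $L$-Lipschitz smoothness bound (Assumption \ref{as:lipschitz}) with the $\mu$-strong convexity bound (Assumption \ref{as:strong-convex}) by applying both to the \emph{same} arbitrary pair $w, w' \in \RR^d$. The key observation is that the strong convexity inequality gives a lower bound on the inner product $\dotprod{\nabla Q(w)-\nabla Q(w')}{w-w'}$, while the Cauchy--Schwarz inequality together with Lipschitz smoothness gives an upper bound on that same inner product. Chaining these two bounds should force $\mu \leq L$.

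\medskip

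\noindent
\textbf{Key steps.} First, fix any two distinct points $w, w' \in \RR^d$ (distinct so that $\norm{w-w'} > 0$ and I can divide by it later). By Cauchy--Schwarz,
\begin{equation}
    \dotprod{\nabla Q(w)-\nabla Q(w')}{w-w'} \leq \norm{\nabla Q(w)-\nabla Q(w')}\cdot\norm{w-w'}. \notag
\end{equation}
Second, apply Assumption \ref{as:lipschitz} to bound the gradient-difference norm, yielding
\begin{equation}
    \dotprod{\nabla Q(w)-\nabla Q(w')}{w-w'} \leq L\norm{w-w'}^2. \notag
\end{equation}
Third, apply Assumption \ref{as:strong-convex} to get the matching lower bound $\mu\norm{w-w'}^2 \leq \dotprod{\nabla Q(w)-\nabla Q(w')}{w-w'}$. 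Fourth, combine the two to obtain $\mu\norm{w-w'}^2 \leq L\norm{w-w'}^2$, and divide through by $\norm{w-w'}^2 > 0$ to conclude $\mu \leq L$.

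\medskip

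\noindent
\textbf{Main obstacle.} There is essentially no hard step here; the only thing to be careful about is the degenerate case. The inequalities $\mu > 0$ and $L > 0$ are assumed, but the argument requires a pair $w \neq w'$ so that the division in the last step is legitimate. As long as $d \geq 1$ such a pair exists, so the lemma holds for any nontrivial domain. One should also note that strong convexity itself already implies $Q$ is nonconstant (its gradient is genuinely varying), which is consistent with being able to pick $w \neq w'$ witnessing a nonzero gradient difference; but strictly speaking the chain of inequalities above goes through for \emph{every} pair, including those where $\nabla Q(w) = \nabla Q(w')$, so no separate case analysis is needed beyond ensuring $\norm{w-w'} > 0$.
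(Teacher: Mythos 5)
Your proof is correct and follows essentially the same route as the paper's: sandwiching the inner product $\dotprod{\nabla Q(w)-\nabla Q(w')}{w-w'}$ between the strong-convexity lower bound $\mu\norm{w-w'}^2$ and the Cauchy--Schwarz-plus-Lipschitz upper bound $L\norm{w-w'}^2$. The only cosmetic difference is that the paper specializes to the pair $(w, w^*)$ with $\nabla Q(w^*)=0$, whereas you keep both points arbitrary; your added care about requiring $w\neq w'$ is a minor but legitimate point the paper leaves implicit.
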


\begin{lemma}
    \label{lem:kx}
    Denote $k^*=\sup_x\{k_x/\sqrt{x}:x\geq 1\}$. Then $k^*<\infty$, and numerically $k^*\approx 1.12$. Equivalently, $k_h\leq k^*\sqrt{h}$ for all $h\geq 1$.
\end{lemma}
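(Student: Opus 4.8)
The plan is to treat $k^*$ as the maximum of a single real variable and to separate the two claims: finiteness ($k^*<\infty$) and the numerical value ($k^*\approx 1.12$). Write $g(x)=k_x/\sqrt{x}$, which by Equation~(\ref{eq-kn}) equals $g(x)=\frac{1}{\sqrt{x}}+\frac{x-1}{\sqrt{x}\,\sqrt{2x-1}}$ and is continuous on $[1,\infty)$. Finiteness is the easy half: for every $x\ge 1$ we have $2x-1\ge x$, so $\sqrt{2x-1}\ge\sqrt{x}$ and hence $\frac{x-1}{\sqrt{2x-1}}\le\frac{x-1}{\sqrt{x}}\le\sqrt{x}$. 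Dividing by $\sqrt{x}$ gives $g(x)\le \frac{1}{\sqrt{x}}+1\le 2$, so $k^*\le 2<\infty$ with no calculus at all. This already establishes the boundedness claim, and as a byproduct the reformulation $k_h\le k^*\sqrt{h}$, which is just the definition of the supremum applied at $x=h$.

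To pin down the value $k^*\approx 1.12$ I would remove the nested square roots by substituting $u=\sqrt{2x-1}$, so that $x=(u^2+1)/2$ and $u$ ranges over $[1,\infty)$ as $x$ does. A short computation rewrites $g$ as $\phi(u)=\frac{u^2+2u-1}{\sqrt{2}\,u\,\sqrt{u^2+1}}$, and the task becomes maximizing $\phi$ over $u\ge 1$. Differentiating and clearing the (strictly positive) denominator, the stationarity condition $\phi'(u)=0$ collapses to the cubic $2u^3-3u^2-1=0$. I would then argue this cubic has a unique root in $[1,\infty)$: its derivative $6u(u-1)$ is nonnegative there, so it is increasing, while it takes the value $-2<0$ at $u=1$ and tends to $+\infty$; call the root $u^*\approx 1.678$. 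Sign analysis of the cubic shows $\phi$ increases on $[1,u^*]$ and decreases on $[u^*,\infty)$, so $u^*$ is the global maximizer and $k^*=\phi(u^*)$.

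Finally I would evaluate and compare against the boundary behavior to certify this is genuinely the maximum: $\phi(1)=1$ (equivalently $g(1)=k_1=1$) and $\lim_{u\to\infty}\phi(u)=1/\sqrt{2}\approx 0.707$, both strictly below $\phi(u^*)\approx 1.12$, confirming $k^*=\phi(u^*)\approx 1.12$. The main obstacle is purely the optimization bookkeeping in the second step: carrying out the differentiation cleanly enough that the critical-point equation reduces to the simple cubic, and then justifying uniqueness of its root together with the increasing-then-decreasing behavior of $\phi$ so that the interior critical point is certified as the global maximum rather than merely a local one. The substitution $u=\sqrt{2x-1}$ is the key device that keeps these computations tractable; differentiating $g$ directly is considerably messier.
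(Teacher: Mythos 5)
Your proof is correct, and it is considerably more careful than the paper's own argument. The paper simply writes down the derivative $\frac{d}{dx}\frac{k_x}{\sqrt{x}} = -\frac{(2x-1)^{3/2}-3x+1}{2(2x^2-x)^{3/2}}$ and then asserts that one can ``verify numerically'' that the maximum occurs near $x\approx 1.91$ with value $\approx 1.12$; it gives no separate argument for finiteness and no certification that the numerically located critical point is the global supremum. You supply both missing pieces: the calculus-free bound $k_x/\sqrt{x}\le 1/\sqrt{x}+1\le 2$ (using $\sqrt{2x-1}\ge\sqrt{x}$ for $x\ge1$) settles $k^*<\infty$ outright, and the substitution $u=\sqrt{2x-1}$ reduces the stationarity condition to the cubic $2u^3-3u^2-1=0$ — which is exactly the equation hiding in the paper's derivative formula, since $(2x-1)^{3/2}=3x-1$ becomes $2u^3=3u^2+1$ under that substitution — whose unique root in $[1,\infty)$ you certify by monotonicity of the cubic, giving increasing-then-decreasing behavior of $\phi$ and hence a genuine global maximum. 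Your endpoint checks $\phi(1)=1$ and $\phi(u)\to 1/\sqrt{2}$ are consistent, and $u^*\approx1.678$ corresponds to $x^*=(u^{*2}+1)/2\approx1.91$ with $\phi(u^*)\approx1.12$, matching the paper's numbers. The only step that remains numerical in your version is the evaluation of $\phi$ at the root of the cubic, which is unavoidable for a decimal answer and is a much smaller leap than the paper's blanket appeal to numerics.
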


\begin{lemma}
    \label{lem:conv-beta}
    Assume $n\mu-(3+k_n\sigma)fL>0$, then there exists $r>0$ that satisfies equation below.
    \begin{equation}
      \label{eq-beta-2}
       r < \frac{n\mu-(3+k_n\sigma)fL}{(n-2f)(1+\sigma)L+(1+k_n\sigma)fL}.
    \end{equation}
    Moreover, if $r>0$ satisfies Equation (\ref{eq-beta-2}), then $\beta>0$. 
\end{lemma}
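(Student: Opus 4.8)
The plan is to treat the two assertions separately and in order. For the existence of some $r>0$ satisfying Inequality (\ref{eq-beta-2}), it suffices to show that the right-hand side of (\ref{eq-beta-2}) is strictly positive, since then any $r$ in the open interval between $0$ and that quantity works. The numerator $n\mu-(3+k_n\sigma)fL$ is positive by hypothesis. For the denominator $(n-2f)(1+\sigma)L+(1+k_n\sigma)fL$, I would first deduce $n>3f$: combining the hypothesis with Lemma \ref{lem:conv-1}, which gives $\mu\le L$, yields $(3+k_n\sigma)fL<n\mu\le nL$, so $(3+k_n\sigma)f<n$, and in particular $3f<n$ because $k_n\sigma\ge 0$. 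Hence $n-2f>f\ge 0$, both summands of the denominator are nonnegative with at least one strictly positive, and the denominator is positive. This settles existence.

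For the second assertion, the key move is to reduce the lower bound on $\beta$ to a single-variable inequality in $r$. Recall from (\ref{eq-beta}) that $\beta=(n-2f)\frac{\mu-r(1+\sigma)L}{1+r}-b(1+k_h\sigma)L$. Using $b\le f$ together with the monotonicity of $k_x$ — which I would verify by differentiating $x\mapsto\frac{x-1}{\sqrt{2x-1}}$ from (\ref{eq-kn}) and noting its derivative equals $\frac{x}{(2x-1)^{3/2}}>0$ for $x\ge 1$, so that $k_h\le k_n$ since $h\le n$ — I can bound the subtracted term from above and obtain $\beta\ge\psi(r)$, where $\psi(r)=(n-2f)\frac{\mu-r(1+\sigma)L}{1+r}-f(1+k_n\sigma)L$. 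It therefore suffices to prove $\psi(r)>0$ whenever $r$ satisfies (\ref{eq-beta-2}).

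Because $n-2f>0$ and $\frac{\mu-r(1+\sigma)L}{1+r}$ has derivative $\frac{-(\mu+(1+\sigma)L)}{(1+r)^2}<0$, the function $\psi$ is strictly decreasing in $r$. Consequently it is enough to show $\psi(r^*)\ge 0$, where $r^*$ denotes the right-hand side of (\ref{eq-beta-2}); then any $r<r^*$ gives $\psi(r)>\psi(r^*)\ge 0$. Clearing the positive factor $1+r$, the inequality $\psi(r^*)\ge 0$ is equivalent to $r^*\le r^{**}$, where $r^{**}=\frac{(n-2f)\mu-f(1+k_n\sigma)L}{(n-2f)(1+\sigma)L+f(1+k_n\sigma)L}$ is the exact root of $\psi$. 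The thresholds $r^*$ and $r^{**}$ share the same positive denominator, and the difference of their numerators is $[(n-2f)\mu-f(1+k_n\sigma)L]-[n\mu-(3+k_n\sigma)fL]=2f(L-\mu)$, which is nonnegative by Lemma \ref{lem:conv-1}. Hence $r^*\le r^{**}$, so $r<r^*$ forces $\psi(r)>0$, and therefore $\beta\ge\psi(r)>0$.

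I expect the last algebraic comparison to be the main obstacle: one must recognize that the clean bound in (\ref{eq-beta-2}) is a deliberate relaxation of the sharp threshold $r^{**}$, obtained by replacing $2f\mu$ with $2fL$, and that the resulting gap $2f(L-\mu)$ is precisely absorbed by $\mu\le L$. The supporting ingredients — positivity of the denominator, the strict monotonicity of $k_x$, and the strict monotonicity of $\psi$ — are routine once arranged in this order.
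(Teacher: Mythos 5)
Your proof is correct and follows essentially the same route as the paper's: both reduce $\beta$ to the lower bound $(n-2f)\frac{\mu-r(1+\sigma)L}{1+r}-f(1+k_n\sigma)L$ via $b\le f$ and $k_h\le k_n$, and both use $\mu\le L$ from Lemma \ref{lem:conv-1} to pass from the sharp threshold to the stated bound (\ref{eq-beta-2}), the only difference being that you package this as a comparison of the two roots $r^*\le r^{**}$ of a decreasing function rather than as a chain of implications. Your explicit derivation of $n>3f$ (hence $n-2f>0$) from the hypothesis is a small improvement, since the paper merely asserts that $n-2f>0$ is assumed.
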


Lemma \ref{lem:conv-beta} implies that we need to bound $\sigma$ for convergence. In general, Echo-CGC is correct if $\sigma = o(\log{n})$. For brevity, we make the following assumption to simplify the proof of convergence and the analysis of  communication complexity. We stress that this assumption can be relaxed using basically the same analysis with a denser mathematical manipulation.
\begin{assumption}
    \label{as:strict-sigma}
    Let $\sigma$ be the variance bound defined in Assumption \ref{as:bounded-variance}. We further assume that $\sigma < \frac{1}{\sqrt{n}}$.
\end{assumption}

Under Assumption \ref{as:strict-sigma}, we can narrow down the bound of $r$ in Lemma \ref{lem:conv-beta} to loosen our assumption on fault tolerance.

\begin{lemma}
    \label{lem:conv-strict-beta}
    Assume $n\mu-(3+k^*)fL>0$ ($k^*\approx1.12$), then there exists $r>0$ satisfying Equation (\ref{eq-deviation-ratio}) such that $\beta >0$.
    \begin{equation}
        \label{eq-deviation-ratio}
        r < \frac{n\mu-(3+k^*)fL}{(n-2f)(1+\sigma)L+(1+k^*)fL}.
    \end{equation}
\end{lemma}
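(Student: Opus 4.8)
The plan is to reduce Lemma~\ref{lem:conv-strict-beta} directly to Lemma~\ref{lem:conv-beta} by replacing the execution-dependent quantity $k_n\sigma$ appearing in Equation~(\ref{eq-beta-2}) with the universal constant $k^*$, using Assumption~\ref{as:strict-sigma} to control the former. First I would establish the key inequality $k_n\sigma < k^*$. By Lemma~\ref{lem:kx} we have $k_n \le k^*\sqrt{n}$, and Assumption~\ref{as:strict-sigma} gives $\sigma < 1/\sqrt{n}$; multiplying these yields $k_n\sigma \le k^*\sqrt{n}\cdot(1/\sqrt{n}) = k^*$, with strictness inherited from the strict bound on $\sigma$.

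Next I would check that the hypothesis of Lemma~\ref{lem:conv-beta} is implied by our hypothesis. Since $k_n\sigma < k^*$ and $fL > 0$, we have $(3+k_n\sigma)fL < (3+k^*)fL$, so $n\mu-(3+k^*)fL>0$ forces $n\mu-(3+k_n\sigma)fL>0$. Hence Lemma~\ref{lem:conv-beta} is available, and in particular its ``moreover'' clause tells us that \emph{any} $r>0$ satisfying Equation~(\ref{eq-beta-2}) already gives $\beta>0$.

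The core step is then a monotonicity comparison of the two bounds on $r$. Viewing the right-hand side of Equation~(\ref{eq-beta-2}) as a function $\phi(\kappa) = \frac{n\mu-(3+\kappa)fL}{(n-2f)(1+\sigma)L+(1+\kappa)fL}$ of the coefficient $\kappa$, I would observe that increasing $\kappa$ strictly decreases the numerator and strictly increases the denominator (both staying positive on the relevant range), so $\phi$ is strictly decreasing. As $k_n\sigma < k^*$, this yields $\phi(k^*) < \phi(k_n\sigma)$; that is, the bound in Equation~(\ref{eq-deviation-ratio}) is no larger than the bound in Equation~(\ref{eq-beta-2}). Therefore every $r>0$ satisfying Equation~(\ref{eq-deviation-ratio}) also satisfies Equation~(\ref{eq-beta-2}) and hence gives $\beta>0$. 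To guarantee such an $r$ exists, I would argue the bound in Equation~(\ref{eq-deviation-ratio}) is strictly positive: its numerator is positive by hypothesis, and its denominator is positive because Lemma~\ref{lem:conv-1} gives $\mu\le L$, so $n\mu>(3+k^*)fL$ forces $n>(3+k^*)f>2f$, making $(n-2f)(1+\sigma)L+(1+k^*)fL>0$.

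The main obstacle I anticipate is pinning down the direction of the monotonicity argument and ensuring every denominator stays strictly positive throughout; the positivity of the denominator in particular rests on the slightly non-obvious consequence $n>2f$, which I would extract from the hypothesis via Lemma~\ref{lem:conv-1}. Everything else is a direct substitution, so once the inequality $k_n\sigma<k^*$ and the strict monotonicity of $\phi$ are in place, the lemma follows immediately from Lemma~\ref{lem:conv-beta}.
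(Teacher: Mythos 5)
Your proposal is correct and follows essentially the same route as the paper's proof: bound $k_n\sigma$ by $k^*$ using Lemma \ref{lem:kx} together with Assumption \ref{as:strict-sigma}, conclude that the right-hand side of Equation (\ref{eq-deviation-ratio}) is smaller than that of Equation (\ref{eq-beta-2}), and invoke Lemma \ref{lem:conv-beta}. You merely spell out the monotonicity of the fraction in $\kappa$ and the positivity of the denominator (via $n>2f$), which the paper leaves implicit.
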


\begin{theorem}
    \label{thm:conv-rate}
    Assume $n\mu-(3+k^*)fL>0$ and $r$ is a value that satisfies Inequality (\ref{eq-deviation-ratio}). Then we can find an $\eta > 0$ such that $\eta < 2\beta/\gamma$,
    which in turn makes $\rho\in[0,1)$.
\end{theorem}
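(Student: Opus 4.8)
The plan is to treat the convergence rate $\rho$ from Equation (\ref{eq-rho}) as a quadratic function of the step size, $\rho(\eta) = \gamma\eta^2 - 2\beta\eta + 1$, and to exploit that both $\beta$ and $\gamma$ are strictly positive under the stated hypotheses. The positivity of $\beta$ is exactly the content of Lemma \ref{lem:conv-strict-beta}, which applies because we assume $n\mu - (3+k^*)fL > 0$ and $r$ satisfies Inequality (\ref{eq-deviation-ratio}). For $\gamma$, I would read off from Equation (\ref{eq-gamma}) that $\gamma = nL^2\bigl(h(1+\sigma^2) + b\alpha_h\bigr)$, where $n,L>0$ and $\alpha_h \geq 0$. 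Combining the hypothesis $n\mu - (3+k^*)fL > 0$ with $\mu \leq L$ from Lemma \ref{lem:conv-1} gives $n > (3+k^*)f \geq 4f$, hence $h \geq n-f \geq 1$; therefore $\gamma \geq nL^2 h(1+\sigma^2) > 0$. Consequently $2\beta/\gamma > 0$, so the interval $(0, 2\beta/\gamma)$ is non-empty, which already settles the claim that some $\eta > 0$ with $\eta < 2\beta/\gamma$ exists.

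It then remains to verify that such an $\eta$ can be chosen so that $\rho(\eta) \in [0,1)$. I would first dispatch the upper bound: since $\rho(\eta) - 1 = \eta(\gamma\eta - 2\beta)$, for $\eta > 0$ we have $\rho(\eta) < 1$ precisely when $\gamma\eta - 2\beta < 0$, i.e. when $\eta < 2\beta/\gamma$. Thus the constraint $0 < \eta < 2\beta/\gamma$ is in fact \emph{equivalent} to $\rho(\eta) < 1$, so the upper bound requires no additional work and, pleasingly, coincides exactly with the bound $\eta < 2\beta/\gamma$ asserted in the theorem.

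The lower bound $\rho(\eta) \geq 0$ is where the only genuine care is needed, and I expect it to be the main (if mild) obstacle: the parabola $\rho(\eta)$ opens upward with $\rho(0) = 1$, but its minimum value $1 - \beta^2/\gamma$, attained at $\eta = \beta/\gamma$, may be negative when $\beta^2 > \gamma$, so $\rho \geq 0$ need not hold throughout $(0, 2\beta/\gamma)$. I would resolve this by a smallness argument: since $\rho$ is continuous with $\rho(0) = 1 > 0$, there is $\delta > 0$ with $\rho(\eta) > 0$ on $[0,\delta)$, and choosing any $\eta \in \bigl(0, \min\{\delta, 2\beta/\gamma\}\bigr)$ yields $\rho(\eta) \in (0,1) \subset [0,1)$. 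Alternatively I could make the choice explicit by solving $\gamma\eta^2 - 2\beta\eta + 1 = 0$: when $\beta^2 > \gamma$ the smaller root $\eta_- = (\beta - \sqrt{\beta^2-\gamma})/\gamma$ is strictly positive (as $\sqrt{\beta^2-\gamma} < \beta$) and strictly below $\beta/\gamma < 2\beta/\gamma$, so any $\eta \in (0,\eta_-]$ works; and when $\beta^2 \leq \gamma$ the discriminant is non-positive, so $\rho(\eta) \geq 0$ for \emph{every} $\eta$ and any $\eta \in (0, 2\beta/\gamma)$ suffices. Either route produces the desired $\eta$ and completes the proof.
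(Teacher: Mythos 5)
Your proposal is correct and follows the same basic strategy as the paper: view $\rho(\eta)=\gamma\eta^2-2\beta\eta+1$ as an upward parabola, get $\beta>0$ from Lemma \ref{lem:conv-strict-beta} and $\gamma>0$ from Equation (\ref{eq-gamma}), and observe that $\rho(\eta)<1$ exactly on $(0,2\beta/\gamma)$. The one place you diverge is the lower bound $\rho\geq 0$. You treat $\beta^2>\gamma$ as a live possibility and resolve it by shrinking the interval (continuity at $0$, or the explicit smaller root $\eta_-$), which yields only the existential form of the claim. The paper instead shows that this case cannot occur: from the definition of $\beta$ one has $\beta\leq n\mu$ (every subtracted term is nonnegative and $\frac{1}{1+r}\le 1$), from $\alpha_h\geq 1$ one has $\gamma\geq nL^2(h+b)=n^2L^2$, and Lemma \ref{lem:conv-1} gives $\mu\leq L$; hence $\beta^2\leq n^2\mu^2\leq n^2L^2\leq\gamma$ and the minimum value $\rho(\beta/\gamma)=1-\beta^2/\gamma$ is already nonnegative. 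This buys the stronger, uniform conclusion that $\rho(\eta)\in[0,1)$ for \emph{every} $\eta\in(0,2\beta/\gamma)$ — which is the more natural reading of ``$\eta<2\beta/\gamma$, which in turn makes $\rho\in[0,1)$'' and gives full freedom in choosing the step size downstream in Theorem \ref{thm:convergent} — whereas your argument, though logically sound for the literal existential statement, leaves the impression that some step sizes below $2\beta/\gamma$ might be inadmissible. Your explicit statement that the minimum ``may be negative when $\beta^2>\gamma$'' is therefore vacuously hedged under the paper's definitions; adding the three-line bound $\beta^2\leq\gamma$ would close that gap and subsume your case analysis.
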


\subsection{Proof of Convergence}
\label{s:converge}

Next, we prove the convergence of our algorithm. That is, Echo-CGC converges to the optimal point $w^*$ of the cost function $Q$.
We prove the convergence under the assumption that $n\mu-(3+k^*)fL>0$.
Due to page limit, we present key proofs here, and the rest can be found in Appendix \ref{app:converge}.

Recall our definition of the conditional expectation $\EE=\EE_{\Xi^t}(\cdot\mid w^t, G_\B^t)$ introduced in Section \ref{sec:assumptions}. Before proving the main theorem, we introduce some preliminary lemmas.

\begin{lemma}
\label{lem:eq-os-4}
For all $t$ and for all $j\in \h$,
\begin{equation}
\label{eq-os-4}
\EE\norm{g_j^t} \leq (1+\sigma)\norm{\nabla Q(w^t)}.
\end{equation}
\end{lemma}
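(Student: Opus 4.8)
The plan is to bound $\EE\norm{g_j^t}$ by splitting the stochastic gradient into its deterministic target $\nabla Q(w^t)$ and a fluctuation term, and then controlling the fluctuation through the second-moment bound of Assumption \ref{as:bounded-variance}. The whole argument reduces to combining the triangle inequality with Jensen's inequality, so I expect it to be short.

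First I would apply the triangle inequality pointwise, i.e.\ for each realization of the random batch $\xi_j^t$:
\[
\norm{g_j^t} \leq \norm{g_j^t - \nabla Q(w^t)} + \norm{\nabla Q(w^t)}.
\]
Taking the conditional expectation $\EE = \EE_{\Xi^t}(\cdot \mid w^t, \G_\B^t)$ of both sides and using linearity, together with the fact (noted in Section \ref{sec:assumptions}) that $\nabla Q(w^t)$ is treated as a constant under this operator, gives
\[
\EE\norm{g_j^t} \leq \EE\norm{g_j^t - \nabla Q(w^t)} + \norm{\nabla Q(w^t)}.
\]

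The remaining task is to bound $\EE\norm{g_j^t - \nabla Q(w^t)}$ by $\sigma\norm{\nabla Q(w^t)}$. Here I would invoke Jensen's inequality (equivalently Cauchy--Schwarz) for the concave square-root function, namely $\EE X \leq \sqrt{\EE X^2}$ for any nonnegative random variable $X$. Applying this to $X = \norm{g_j^t - \nabla Q(w^t)}$ and then substituting the bound of Assumption \ref{as:bounded-variance},
\[
\EE\norm{g_j^t - \nabla Q(w^t)} \leq \sqrt{\EE\norm{g_j^t - \nabla Q(w^t)}^2} \leq \sqrt{\sigma^2\norm{\nabla Q(w^t)}^2} = \sigma\norm{\nabla Q(w^t)}.
\]
Combining the two displays yields $\EE\norm{g_j^t} \leq (1+\sigma)\norm{\nabla Q(w^t)}$, as claimed.

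There is no serious obstacle here; the only point requiring minor care is that the passage from $\EE\norm{\cdot}$ to $\sqrt{\EE\norm{\cdot}^2}$ must go through Jensen, since Assumption \ref{as:bounded-variance} controls only the second moment and does not directly bound the first moment. I would also keep the conditioning explicit so that it is transparent that $\nabla Q(w^t)$ factors out of the expectation as a constant. It is worth observing that Assumption \ref{as:iid} (unbiasedness) is not actually needed for this particular inequality, because the variance bound is already centered at $\nabla Q(w^t)$ rather than at $\EE(g_j^t)$; only the triangle inequality and Assumption \ref{as:bounded-variance} are used.
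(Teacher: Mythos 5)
Your proof is correct and follows essentially the same route as the paper's: the triangle inequality to split off $\nabla Q(w^t)$, Jensen's inequality to pass from the first moment to the second, and Assumption \ref{as:bounded-variance} to close the bound. The observation that Assumption \ref{as:iid} is not needed is a nice extra, but otherwise the two arguments coincide.
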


\begin{lemma}
    \label{lem:order-stats}
    Recall that $\hat{g}_j^t$ is the gradient after applying the CGC filter. For all $t$ and for all $j\in\{1,2,\ldots,n\}$,
    \begin{equation}
        \label{eq-order-stat}
        \EE\norm{\hat{g}_j^t} \leq (1+k_h\sigma)\norm{\nabla Q(w^t)}.
    \end{equation}
\end{lemma}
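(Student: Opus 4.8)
The plan is to reduce the statement to a bound on the \emph{expected maximum} of the fault-free workers' local gradient norms, and then invoke a sharp distribution-free order-statistic inequality whose constant is exactly $k_h$.

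First I would record the purely deterministic clipping property of the CGC filter in Equation~(\ref{eq-CGC}): after the server sorts so that $\norm{\tilde g_{i_1}^t}\le\dots\le\norm{\tilde g_{i_n}^t}$, every filtered gradient satisfies $\norm{\hat g_j^t}\le\norm{\tilde g_{i_{n-f}}^t}$, since the top $f$ gradients are rescaled down to norm $\norm{\tilde g_{i_{n-f}}^t}$ while the remaining $n-f$ already have smaller norm. I would then control the $(n-f)$-th order statistic by a pigeonhole argument: the $f+1$ largest reconstructed gradients occupy positions $i_{n-f},\dots,i_n$, and since at most $b\le f$ of them belong to Byzantine workers, at least one index, say $k\in\h$, is fault-free; hence $\norm{\tilde g_{i_{n-f}}^t}\le\norm{\tilde g_k^t}$. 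Combining the two facts gives, \emph{uniformly in} $j$, $\norm{\hat g_j^t}\le\max_{k\in\h}\norm{\tilde g_k^t}$.

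The second ingredient is a structural claim that the server's reconstruction preserves the norm of a fault-free worker's gradient, i.e.\ $\norm{\tilde g_k^t}=\norm{g_k^t}$ for every $k\in\h$. For a fault-free worker broadcasting its raw gradient this is immediate. For one sending an echo message $(\norm{g_k^t}/\norm{(g_k^t)^*},x_k^t,I_k^t)$, I would argue that the matrix $A_I$ rebuilt by the server from $\{\tilde g_i^t:i\in I_k^t\}$ coincides with worker $k$'s matrix $A_k^t$: every index stored in $R_k$ corresponds to a worker that broadcast a \emph{raw} vector, which by the reliable local broadcast property the server stored verbatim, so $\tilde g_i^t=g_i^t$ for those $i$. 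Therefore $A_I x_k^t=A_k^t x_k^t=(g_k^t)^*$, and the scaling factor cancels the projection's norm, yielding $\tilde g_k^t=\frac{\norm{g_k^t}}{\norm{(g_k^t)^*}}(g_k^t)^*$ and hence $\norm{\tilde g_k^t}=\norm{g_k^t}$; division by zero is excluded because the send-check Inequality~(\ref{eq-sendcheck}) with $r<1$ forces $(g_k^t)^*\neq\vec{0}$. This reduces the goal to showing $\EE\max_{k\in\h}\norm{g_k^t}\le(1+k_h\sigma)\norm{\nabla Q(w^t)}$.

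For the final step I would apply the classical distribution-free bound that for any identically distributed variables $X_1,\dots,X_h$ with common mean $m$ and variance $s^2$ one has $\EE\max_k X_k\le m+s\,\frac{h-1}{\sqrt{2h-1}}$; here $X_k=\norm{g_k^t}$, which are IID by Assumption~\ref{as:iid}. It then remains to bound the two moments: the mean satisfies $m\le(1+\sigma)\norm{\nabla Q(w^t)}$ by Lemma~\ref{lem:eq-os-4}, and for the variance I would use the identity $\EE\norm{g_k^t}^2-\norm{\EE g_k^t}^2=\EE\norm{g_k^t-\EE g_k^t}^2$ together with Jensen's inequality $(\EE\norm{g_k^t})^2\ge\norm{\EE g_k^t}^2$, which give $\mathrm{Var}(\norm{g_k^t})\le\EE\norm{g_k^t-\nabla Q(w^t)}^2\le\sigma^2\norm{\nabla Q(w^t)}^2$ by Assumptions~\ref{as:iid}--\ref{as:bounded-variance}. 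Substituting $m\le(1+\sigma)\norm{\nabla Q(w^t)}$ and $s\le\sigma\norm{\nabla Q(w^t)}$ collapses the constant exactly to $1+\sigma\bigl(1+\frac{h-1}{\sqrt{2h-1}}\bigr)=1+k_h\sigma$. The step I expect to be the main obstacle is this order-statistic inequality: everything else is algebra plus the clean norm-preservation identity, but recovering the precise constant $\frac{h-1}{\sqrt{2h-1}}$ (rather than a looser $\sqrt{h}$ obtained from a crude union/Jensen bound) is exactly why the paper introduces $k_x$ in Equation~(\ref{eq-kn}); a secondary point to double-check is that the reduction to $\max_{k\in\h}\norm{g_k^t}$ is valid for \emph{all} indices $j$, including Byzantine ones, which holds because the clipping bound $\norm{\hat g_j^t}\le\norm{\tilde g_{i_{n-f}}^t}$ is uniform in $j$.
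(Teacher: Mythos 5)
Your proposal is correct and follows essentially the same route as the paper's proof: bound every filtered gradient by the $(n-f)$-th order statistic, use the pigeonhole/norm-preservation facts to dominate it by $\max_{k\in\h}\norm{g_k^t}$, and then apply the Gumbel--Hartley--David bound with the mean controlled by Lemma~\ref{lem:eq-os-4} and the variance by the vector Jensen identity plus Assumption~\ref{as:bounded-variance}. Your treatment of the pigeonhole step and of why $\norm{\tilde g_k^t}=\norm{g_k^t}$ for echo-sending fault-free workers is in fact spelled out more explicitly than in the paper, but the underlying argument is the same.
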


The proof of Lemma \ref{lem:order-stats} is based on Lemma \ref{lem:eq-os-4} and the following prior results: Gumbel \cite{Gumbel-order-stats1} and Hartley and David \cite{Hartley-David-OrderStats2} proved that given identical means and variances $(\mu,\sigma^2)$, the upper bound of the expectation of the largest random variable among $n$ independent random variables is $\mu+\frac{\sigma(n-1)}{\sqrt{2n-1}}$.

\begin{lemma}
    \label{lem:order-stat2}
    Following the same setup, for all $t$ and for all $j\in\{1,2,\ldots,n\}$,
    \begin{equation}
        \EE\norm{\hat{g}_j^t}^2 \leq \alpha_h\norm{\nabla Q(w^t)}^2.
    \end{equation}
\end{lemma}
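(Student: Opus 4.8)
The plan is to reduce the second-moment bound on the filtered gradient to a bound on the second moment of the largest fault-free gradient norm, and then control that maximum with a bias--variance split. Throughout I fix a round $t$ and abbreviate $c = \norm{\nabla Q(w^t)}$, which $\EE$ treats as a constant. First I would record the purely deterministic consequence of the CGC filter in Equation (\ref{eq-CGC}): after relabeling so that $\norm{\tilde{g}_{i_1}^t} \leq \cdots \leq \norm{\tilde{g}_{i_n}^t}$, every filtered gradient obeys $\norm{\hat{g}_j^t} \leq \norm{\tilde{g}_{i_{n-f}}^t}$, since the top $f$ entries are rescaled to exactly this norm and the remaining $n-f$ already have smaller norm. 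Because $b \leq f$ forces $h \geq n-f$, the $(n-f)$-th smallest norm cannot exceed the largest fault-free norm: if it did, all $h$ fault-free gradients would lie strictly below the threshold, leaving at most $n-f-1 < h$ positions for them, a contradiction. Combined with the norm-preservation of the server's reconstruction for fault-free workers (for fault-free $i$ we have $\tilde{g}_i^t = g_i^t$ in the raw case, and $\tilde{g}_i^t = (\norm{g_i^t}/\norm{(g_i^t)^*})\,(g_i^t)^*$ with norm exactly $\norm{g_i^t}$ in the echo case, since $R_i$ stores only raw vectors that reach the server identically under reliable broadcast), this gives the pointwise bound $\norm{\hat{g}_j^t} \leq \max_{i\in\h}\norm{g_i^t} =: M$ for every $j$, hence $\EE\norm{\hat{g}_j^t}^2 \leq \EE M^2$.

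Next I would split $\EE M^2 = \mathrm{Var}(M) + (\EE M)^2$ and bound the two terms separately. For the squared-mean term I reuse the order-statistic estimate already underlying Lemma~\ref{lem:order-stats}: writing $Y_i = \norm{g_i^t}$ for $i\in\h$ (independent and identically distributed by the i.i.d.\ data-batch assumption), each $Y_i$ has mean at most $(1+\sigma)c$ by Lemma~\ref{lem:eq-os-4} and, via the reverse triangle inequality together with Assumption~\ref{as:bounded-variance}, mean-square deviation from $c$ at most $\sigma^2 c^2$. The Gumbel and Hartley--David order-statistic bound, monotone in both mean and standard deviation, then yields $\EE M \leq (1+\sigma)c + \sigma c\,\frac{h-1}{\sqrt{2h-1}} = (1+k_h\sigma)c$, so $(\EE M)^2 \leq (1+k_h\sigma)^2 c^2$.

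For the variance term I would invoke sub-additivity of the variance of a maximum of independent variables, $\mathrm{Var}(\max_{i\in\h} Y_i) \leq \sum_{i\in\h}\mathrm{Var}(Y_i)$. This is where the real work sits, and I expect to justify it through the Efron--Stein inequality: replacing a single $Y_i$ by an independent copy $Y_i'$ changes the maximum by at most $|Y_i - Y_i'|$ (the map $y\mapsto\max(y,\cdot)$ is $1$-Lipschitz), so each Efron--Stein term is at most $2\,\mathrm{Var}(Y_i)$. Since $\mathrm{Var}(Y_i) \leq \EE[(Y_i - c)^2] \leq \EE\norm{g_i^t - \nabla Q(w^t)}^2 \leq \sigma^2 c^2$, summing over the $h$ fault-free workers gives $\mathrm{Var}(M) \leq h\sigma^2 c^2$.

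Finally I would combine the pieces, $\EE\norm{\hat{g}_j^t}^2 \leq \EE M^2 = \mathrm{Var}(M) + (\EE M)^2 \leq h\sigma^2 c^2 + (1+k_h\sigma)^2 c^2 = \alpha_h\, c^2$, which is exactly the claim with $\alpha_h = h\sigma^2 + (1+k_h\sigma)^2$. The hard part will be the variance-of-the-maximum inequality: the order-statistic argument of Lemma~\ref{lem:order-stats} already controls $(\EE M)^2$ but says nothing about $\EE M^2$, and it is precisely the mutual independence of the fault-free norms, plus the reconstruction norm-preservation ensuring the maximum really ranges over the $Y_i$, that lets Efron--Stein supply the extra $h\sigma^2$ term.
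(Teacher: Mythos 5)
Your proof is correct and follows essentially the same skeleton as the paper's: reduce $\EE\norm{\hat{g}_j^t}^2$ to $\EE M^2$ for $M=\max_{i\in\h}\norm{g_i^t}$ via the CGC threshold and the norm-preservation of the server's reconstruction, split $\EE M^2=\mathrm{Var}(M)+(\EE M)^2$, bound $(\EE M)^2\leq(1+k_h\sigma)^2\norm{\nabla Q(w^t)}^2$ by the Gumbel/Hartley--David order-statistic estimate, and bound $\mathrm{Var}(M)\leq h\sigma^2\norm{\nabla Q(w^t)}^2$. The one place you diverge is the variance-of-the-maximum step: the paper simply cites Papadatos's theorem that the largest order statistic of $n$ i.i.d.\ variables with variance $\sigma^2$ has variance at most $n\sigma^2$, whereas you derive the same $h\sigma^2 c^2$ bound from the Efron--Stein inequality using the coordinatewise $1$-Lipschitz property of the max. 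Your route is self-contained and needs only independence (not identical distribution) of the fault-free norms, at the cost of invoking Efron--Stein; the paper's route is a one-line citation. Both yield exactly $\alpha_h=h\sigma^2+(1+k_h\sigma)^2$, and your pigeonhole justification that $\norm{\tilde{g}_{i_{n-f}}^t}\leq M$ matches the argument the paper gives inside the proof of Lemma~\ref{lem:order-stats}.
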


The proof of Lemma \ref{lem:order-stat2} is based on Lemma \ref{lem:eq-os-4} and the following result: Papadatos \cite{Papdatos-max-variance-OS} proved that for $n$ i.i.d. random variables $X_1\leq X_2\leq \dotsm\leq X_n$ with finite variance $\sigma^2$, the maximum variance of $X_n$ is bounded above by $n\sigma^2$.

Lemma \ref{lem:order-stats} and Lemma \ref{lem:order-stat2} provide upper bounds on $\EE\norm{\hat{g}_j^t}$ and $\EE\norm{\hat{g}_j^t}^2$. 
These two bounds allow us to bound the impact of bogus gradients transmitted by a faulty node $j$. If $j$ transmitted an extreme gradient, it would be dropped by the CGC filter; otherwise, these two bounds essentially imply that the filtered gradient $\hat{g}_j^t$ has some nice property even if $j$ is faulty. 
For fault-free gradients, Lemma \ref{lem:eq-os-4} provides a better bound.

\begin{theorem}
    \label{thm:convergent}
    Assume that $n\mu-(3+k^*)fL>0$. We can find $r>0$ that satisfies Inequality (\ref{eq-deviation-ratio}) and $\eta>0$ such that $\eta<2\beta/\gamma$. Echo-CGC with the chosen $r$ and $\eta$ will converge to the optimal parameter $w^*$ as $t\to\infty$.
\end{theorem}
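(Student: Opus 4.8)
The plan is to prove a one-step mean-square contraction and then iterate it. Writing the update as $w^{t+1}-w^*=(w^t-w^*)-\eta g^t$ with $g^t=\sum_{j=1}^n\hat g_j^t$, I would first expand
\[
\norm{w^{t+1}-w^*}^2=\norm{w^t-w^*}^2-2\eta\dotprod{g^t}{w^t-w^*}+\eta^2\norm{g^t}^2,
\]
and take the conditional expectation $\EE=\EE_{\Xi^t}(\cdot\mid w^t,\mathcal G_\B^t)$, under which $w^t$, $\nabla Q(w^t)$, and the Byzantine gradients are constants. The whole argument then reduces to the two estimates
\begin{align}
\EE\dotprod{g^t}{w^t-w^*}&\geq\beta\norm{w^t-w^*}^2,\notag\\
\EE\norm{g^t}^2&\leq\gamma\norm{w^t-w^*}^2,\notag
\end{align}
which combine to give $\EE\norm{w^{t+1}-w^*}^2\leq\rho\norm{w^t-w^*}^2$ with $\rho$ as in Equation (\ref{eq-rho}).

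For the second-moment bound I would apply $\norm{g^t}^2\leq n\sum_{j=1}^n\norm{\hat g_j^t}^2$ and split the index set. For $j\in\h$, clipping only decreases the norm and $\EE g_j^t=\nabla Q(w^t)$ (Assumption \ref{as:iid}), so the bias--variance decomposition together with Assumption \ref{as:bounded-variance} yields $\EE\norm{\hat g_j^t}^2\leq(1+\sigma^2)\norm{\nabla Q(w^t)}^2$; for $j\in\B$, Lemma \ref{lem:order-stat2} gives $\EE\norm{\hat g_j^t}^2\leq\alpha_h\norm{\nabla Q(w^t)}^2$. Summing over the $h$ fault-free and $b$ Byzantine indices and using $\norm{\nabla Q(w^t)}=\norm{\nabla Q(w^t)-\nabla Q(w^*)}\leq L\norm{w^t-w^*}$ (Assumption \ref{as:lipschitz}, since $\nabla Q(w^*)=0$) reproduces exactly $\gamma=nL^2\!\left(h(1+\sigma^2)+b\alpha_h\right)$.

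The drift (inner-product) bound is the crux and the main obstacle. I would write $\dotprod{g^t}{w^t-w^*}=\sum_j\dotprod{\hat g_j^t}{w^t-w^*}$ and treat three groups. For each of the $b$ Byzantine workers, Cauchy--Schwarz with Lemma \ref{lem:order-stats} bounds the term below by $-(1+k_h\sigma)L\norm{w^t-w^*}^2$, giving the $-b(1+k_h\sigma)L$ part of $\beta$. For the fault-free workers I must control the echo approximation: the server reconstructs $\tilde g_j^t=(\norm{g_j^t}/\norm{(g_j^t)^*})(g_j^t)^*$, so $\norm{\tilde g_j^t}=\norm{g_j^t}$, and using that $(g_j^t)^*$ is the orthogonal projection of $g_j^t$ onto $\mathrm{span}(R_j)$ together with the send-check (\ref{eq-sendcheck}), the deviation $\norm{\tilde g_j^t-g_j^t}$ is controlled by a multiple of $r\norm{g_j^t}$. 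Combining this with $\EE g_j^t=\nabla Q(w^t)$, strong convexity (Assumption \ref{as:strong-convex} with $\nabla Q(w^*)=0$), and the norm bound of Lemma \ref{lem:eq-os-4} produces a per-worker lower bound of the form $\frac{\mu-r(1+\sigma)L}{1+r}\norm{w^t-w^*}^2$, where the denominator $1+r$ reflects relating the scaled echo gradient back to the stochastic gradient. The final delicate point is the counting: since the CGC filter (\ref{eq-CGC}) keeps the $n-f$ smallest-norm reconstructed gradients and at most $b\leq f$ of these are Byzantine, at least $n-2f$ fault-free workers are kept unclipped (so $\hat g_j^t=\tilde g_j^t$), while every remaining fault-free worker contributes a nonnegative scalar multiple of its reconstructed gradient and so does not decrease the bound. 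This yields the $(n-2f)\frac{\mu-r(1+\sigma)L}{1+r}$ term and hence $\beta$ as in Equation (\ref{eq-beta}).

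Finally I would close the recursion. By Lemma \ref{lem:conv-strict-beta} the chosen $r$ satisfying Inequality (\ref{eq-deviation-ratio}) forces $\beta>0$, and by Theorem \ref{thm:conv-rate} the chosen $\eta<2\beta/\gamma$ makes $\rho\in[0,1)$. Taking total expectation across rounds (law of total expectation) and unrolling the one-step contraction gives $\EE\norm{w^t-w^*}^2\leq\rho^t\norm{w^0-w^*}^2\to0$, so $w^t\to w^*$ in mean square, which is the asserted convergence. I expect the hardest part to be precisely the drift bound: keeping the two independent error sources separate---the adversarially influenced clipping threshold on one hand, and the fact that each fault-free echo gradient is only an $r$-approximation of a \emph{stochastic} gradient on the other---so that exactly the $n-2f$ guaranteed-good workers drive the contraction. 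This is the step that is genuinely more involved than the deterministic CGC argument in \cite{vaidya-2f-redundancy}.
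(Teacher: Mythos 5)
Your proposal follows essentially the same route as the paper's proof: the same expansion of $\EE\norm{w^{t+1}-w^*}^2$ into the three terms, the same second-moment bound yielding $\gamma$ via Lemma \ref{lem:order-stat2} and the bias--variance split, the same drift bound obtained by handling Byzantine workers with Lemma \ref{lem:order-stats}, modelling each fault-free reconstructed gradient as a norm-preserving $r$-perturbation scaled by $a_j\geq 1/(1+r)$, and partitioning the fault-free workers into the at-least-$(n-2f)$ unclipped ones and the clipped ones whose contribution is nonnegative. The closing step (Lemma \ref{lem:conv-strict-beta} and Theorem \ref{thm:conv-rate} to get $\rho\in[0,1)$, then unrolling the contraction) is also identical, so the proposal is correct and matches the paper's argument.
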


\begin{proof}
    Our ultimate goal is to show that the sequence $\{\EE\norm{w^t-w^*}^2\}_{t=0}^\infty$ converges to $0$. Recall that the aggregation rule of the algorithm is $w^{t+1}=w^t-\eta g^t$. Thus, we obtain that
    \begin{align}
        \EE\norm{w^{t+1}-w^*}^2 
        &\leq \EE\norm{w^t-w^*-\eta g^t}^2 \notag \\
        &= \underbrace{\EE\norm{w^t-w^*}^2}_A
        - \underbrace{2\eta\EE\dotprod{w^t-w^*}{g^t}}_B 
        + \underbrace{\eta^2\EE\norm{g^t}^2}_C.
    \end{align}
    Since $w^t$ is known, $w^t$ can be treated as a constant, and $\EE\norm{w^t-w^*}^2=\norm{w^t-w^*}^2$.
    
    \textbf{Part $C$}: In Appendix \ref{app:part-C}, we show that the following inequality holds.
    \begin{align}
        \label{eq-C3}
        \EE\norm{g^t}^2 \leq \gamma \norm{w^t-w^*}^2.
    \end{align}
    
    \textbf{Part $B$}: By linearity of inner product,
    \begin{align}
        \label{eq-B0}
        \dotprod{w^t-w^*}{g^t} 
        &= \sum_{j=1}^n\dotprod{w^t-w^*}{\hat{g}_j^t} \notag\\
        &= \sum_{j\in\h}\dotprod{w^t-w^*}{\hat{g}_j^t} + \sum_{j\in\B}\dotprod{w^t-w^*}{\hat{g}_j^t}.
    \end{align}
    First, by Schwarz Inequality, $\dotprod{w^t-w^*}{\hat{g}_j^t}\geq -\norm{w^t-w^*}\norm{\hat{g}_j^t}$; by Lemma \ref{lem:order-stats} and $L$-Lipschitz assumption, $\EE\norm{\hat{g}_j^t}\leq (1+k_h\sigma)L\norm{w^t-w^*}$. Thus,
    \begin{equation}
        \label{eq-Bn0}
        \EE\dotprod{w^t-w^*}{\hat{g}_j^t} 
        \geq -(1+k_h\sigma) L\norm{w^t-w^*}^2, \enspace \forall j\in\B.
    \end{equation}
    Next, observe that by our algorithm, for each $j\in\h$, the received gradient before CGC filter $\tilde{g}_j^t$ satisfies (i) $\norm{\tilde{g}_j^t}=\norm{g_j^t}$ and (ii) $\tilde{g}_j^t=a_j(g_j^t+\Delta g_j^t)$, for some constant $a_j=\norm{g_j^t}/\norm{g_j^t+\Delta g_j^t}$ and a vector $\Delta g_j^t$ 
    such that $\norm{\Delta g_j^t}\leq r\norm{g_j^t}$. This implies $a_j\geq 1/(1+r)$. Therefore,
    \begin{align}
        \label{eq-Bn1}
        \EE\dotprod{w^t-w^*}{\tilde{g}_j^t} 
        &= \EE\dotprod{w^t-w^*}{a_j(g_j^t+\Delta g_j^t)} \notag\\
        &\geq \frac{1}{1+r}\left(\dotprod{w^t-w^*}{\EE g_j^t} + \EE\dotprod{w^t-w^*}{\Delta g_j^t}\right), \enspace \forall j\in\h.
    \end{align}
    By Assumption \ref{as:iid}, $\EE g_j^t=\nabla Q(w^t)$; by strong convexity,
    \[
    \dotprod{w^t-w^*}{\nabla Q(w^t)}\geq \mu\norm{w^t-w^*}^2.
    \]
    By Schwarz inequality, $\EE\dotprod{w^t-w^*}{\Delta g_j^t} \geq -\norm{w^t-w^*}\EE\norm{\Delta g_j^t}$; and $\EE\norm{\Delta g_j^t}\leq r\EE\norm{g_j^t}$.
    By Lemma \ref{lem:eq-os-4} and $L$-Lipschitz assumption, $\EE\norm{g_j^t}\leq (1+\sigma)L\norm{w^t-w^*}$. Thus,
    \begin{equation}
        \EE\dotprod{w^t-w^*}{\Delta g_j^t} \geq -r(1+\sigma)L\norm{w^t-w^*}^2. \notag
    \end{equation}
    Upon substituting these results into Equation (\ref{eq-Bn1}), we obtain that
    \begin{equation}
        \label{eq-Bn2}
        \EE\dotprod{w^t-w^*}{\tilde{g}_j^t} \geq \frac{\mu-r(1+\sigma)L}{1+r}\norm{w^t-w^*}^2, \enspace \forall j\in\h.
    \end{equation}
    
    We partition $\h$ into two parts: $\h_1=\h\cap\{i_1,\ldots,i_{n-f}\}$ and $\h_2=\h\setminus\h_1$. For each $j\in\h_1$, the received gradient is unchanged by CGC filter, i.e., $\hat{g}_j^t=\tilde{g}_j^t$. Therefore, Equation (\ref{eq-Bn2}) also holds for $\hat{g}_j^t$, for all $j\in\h_1$.
    
    The case of $\h_2$ is similar. Note that for each $j\in\h_2$, the gradient $\tilde{g}_j^t$ is scaled down to $\hat{g}_j^t$ by CGC filter. In other words, there exists some constant $a_j'\geq 0$ such that $\hat{g}_j^t=a_j'\tilde{g}_j^t$. Therefore, by Equation (\ref{eq-Bn1}),
    \begin{equation}
        \EE\dotprod{w^t-w^*}{\hat{g}_j^t} 
        = \EE\dotprod{w^t-w^*}{a_j'\tilde{g_j^t}}
        = a_j'\EE\dotprod{w^t-w^*}{\tilde{g}_j^t}, \enspace \forall j\in\h_2. \notag
    \end{equation}
    We can verify that if by assumption that $r>0$ satisfies Equation (\ref{eq-deviation-ratio}), then $\mu-r(1+\sigma)L>0$; and Equation (\ref{eq-Bn1}) implies that $\EE\dotprod{w^t-w^*}{\tilde{g}_j^t}\geq 0$. Therefore,
    \begin{equation}
        \label{eq-Bn3}
        \EE\dotprod{w^t-w^*}{\hat{g}_j^t} \geq 0, \enspace \forall j\in\h_2.
    \end{equation}
    Note that $|\h_1|\geq h-2f$. Upon substituting Equation (\ref{eq-Bn0}), (\ref{eq-Bn2}), (\ref{eq-Bn3}) into Equation (\ref{eq-B0}), we obtain that
    \begin{align}
        \label{eq-B5}
        \EE\dotprod{w^t-w^*}{g^t}
        &\geq \left( (n-2f)\frac{\mu-r(1+\sigma)L}{1+r} - b(1+k_h\sigma)L \right)\norm{w^t-w^*}^2.
    \end{align}
    By definition of $\beta$ in Equation (\ref{eq-beta}), this implies $\EE\dotprod{w^t-w^*}{g^t}\geq \beta\norm{w^t-w^*}^2$.
    
    \textbf{Conclusion:}
    Upon combining part $A,B$ and $C$, by definition of $\rho$ in Equation (\ref{eq-rho}),
    \begin{equation}
        \EE\norm{w^{t+1}-w^*}^2 \leq \rho\norm{w^t-w^*}^2, \enspace \forall t=0,1,2,\ldots \notag
    \end{equation}
    Recall the definition of the conditional expectation operator $\EE$. This implies that
    \begin{equation}
        \EE(\norm{w^t-w^*}^2\mid w^0,\mathcal{G}_\B^0,\ldots,\mathcal{G}_\B^t) \leq \rho^t\norm{w^0-w^*}^2 \notag
    \end{equation}
    By Theorem \ref{thm:conv-rate}, $\rho\in(0,1)$. Therefore, as $t\to\infty$, $\norm{w^t-w^*}^2$ converges to $0$. In other words, $w^t$ converges to the optimal parameter $w^*$. This proves the theorem.
\end{proof}

\subsection{Communication Complexity}
\label{s:communication}
We analyze the communication complexity of the Echo-CGC algorithm, and show that under suitable conditions, it effectively reduces communication complexity compared to prior algorithms \cite{Rachid_Krum, vaidya-2f-redundancy}. 
First consider a ball in $\RR^d$ whose center is the true gradient $\nabla Q(w^t)$:
\begin{equation}
    B(\nabla Q(w^t), \frac{r}{2+r}\norm{\nabla Q(w^t)}) = \{u\in\RR^d:\norm{u-\nabla Q(w^t)}\leq \frac{r}{2+r}\norm{\nabla Q(w^t)}\},
\end{equation}
where $r>0$ is the deviation ratio. For a slight abuse of notations, we abbreviate the ball as $B$. This should not be confused with $\B$, the set of Byzantine workers. We present only the main results, and the proofs can be found in Appendix \ref{app:communication}.

\begin{lemma}
    \label{lem:ball}
    For all $u,v\in B$, $\norm{u-v}\leq r\norm{u}$ (and $\norm{u-v}\leq r\norm{v}$).
\end{lemma}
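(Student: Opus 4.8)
The plan is to reduce everything to two applications of the triangle inequality, exploiting the fact that the radius $\frac{r}{2+r}\norm{\nabla Q(w^t)}$ has been chosen precisely so that the bounds line up. Write $c = \nabla Q(w^t)$ for the center of $B$ and $\delta = \frac{r}{2+r}\norm{c}$ for its radius, so that membership $u,v\in B$ means exactly $\norm{u-c}\leq\delta$ and $\norm{v-c}\leq\delta$.

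First I would produce an upper bound on the left-hand side $\norm{u-v}$. By the triangle inequality through the center,
\begin{equation}
    \norm{u-v} \leq \norm{u-c} + \norm{c-v} \leq 2\delta = \frac{2r}{2+r}\norm{c}. \notag
\end{equation}
Next I would produce a matching lower bound on $\norm{u}$ via the reverse triangle inequality, which is the step that really uses the specific form of the radius:
\begin{equation}
    \norm{u} \geq \norm{c} - \norm{u-c} \geq \norm{c} - \delta = \left(1-\frac{r}{2+r}\right)\norm{c} = \frac{2}{2+r}\norm{c}. \notag
\end{equation}
Multiplying this by $r$ gives $r\norm{u} \geq \frac{2r}{2+r}\norm{c}$, which is exactly the upper bound already obtained for $\norm{u-v}$; chaining the two inequalities yields $\norm{u-v}\leq r\norm{u}$. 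The claim $\norm{u-v}\leq r\norm{v}$ follows by the identical argument with the roles of $u$ and $v$ swapped, since the hypotheses on $u$ and $v$ are symmetric.

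There is no substantive obstacle here; the only thing to watch is that the estimate is tight, so the choice of $\frac{r}{2+r}$ cannot be loosened. Concretely, the argument hinges on the cancellation $\frac{2r}{2+r}\norm{c} = r\cdot\frac{2}{2+r}\norm{c}$, i.e.\ the diameter bound $2\delta$ must equal $r$ times the worst-case lower bound $\norm{c}-\delta$ on $\norm{u}$; solving $2\delta = r(\norm{c}-\delta)$ for $\delta$ recovers $\delta = \frac{r}{2+r}\norm{c}$, confirming that the radius in the definition of $B$ is exactly what makes the lemma hold. I would remark on this to motivate the otherwise mysterious constant, but the formal proof is just the two displayed inequalities above followed by the symmetric case.
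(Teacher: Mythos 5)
Your proof is correct and follows essentially the same route as the paper's: the triangle inequality through the center $\nabla Q(w^t)$ to bound $\norm{u-v}$ by twice the radius, and the reverse triangle inequality to show $\norm{\nabla Q(w^t)}\leq\frac{2+r}{2}\norm{u}$, after which the two bounds chain together. The closing remark explaining why the radius $\frac{r}{2+r}\norm{\nabla Q(w^t)}$ is exactly the right constant is a nice addition, but the underlying argument is the same.
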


Given Lemma \ref{lem:ball}, we compute the probability that an arbitrary gradient $g_j^t$ is in the ball $B$.
By Markov's Inequality,
\begin{align}
    \prob(g_j^t\in B)
    &= \prob\left(\norm{g_j^t-\nabla Q(w^t)}^2\leq \frac{r^2}{(2+r)^2}\norm{\nabla Q(w^t)}^2\right) \notag\\
    &\geq 1 - \frac{\EE\norm{g_j^t-\nabla Q(w^t)}^2}{\frac{r^2}{(2+r)^2}\norm{\nabla Q(w^t)}^2}.
\end{align}
By Assumption \ref{as:bounded-variance}, $\EE\norm{g_j^t-\nabla Q(w^t)}^2\leq \sigma^2\norm{\nabla Q(w^t)}^2$, so we conclude that $\prob(g_j^t\in B) \geq p$, where $p$ is the lower bound defined as $p=1-(1+2/r)^2\sigma^2$.

Denote $n_B=|\{j:g_j^t\in B\}|$ and $n^*$ as the number of workers that send the ``echo message'' in a round. By Lemma \ref{lem:ball}, $n^*\geq n_B-1$.  Since each event $\{g_j^t\in B\}$ is independent and has a fixed probability, $n^*$ follows a Binomial distribution with success probability $\prob(g_j^t\in B)$ which is bounded below by $p$. Therefore, 
\begin{equation}
    \EE n^*\geq \EE n_B - 1 \geq np-1. \notag
\end{equation}
For $n\gg 1$, we assume that $1/n\approx 0$. Also in practice, $d\gg n$, so the message complexity of each echo message (in $O(n)$ bits) is negligible compared to raw gradients (in $O(d)$ bits). Hence, the ratio of bit complexity of our algorithm and prior algorithms (e.g., \cite{Rachid_Krum,vaidya-2f-redundancy}) can be approximately bounded above as follows:
\begin{align}
    \frac{\text{bit complexity of Echo-CGC}}{\text{bit complexity of prior algorithms}} 
    &= \frac{n^*O(n)+(n-n^*)O(d)}{nO(d)} \notag\\
    &\leq \frac{(np-1)O(n)+[n-(np-1)]O(d)}{nO(d)} \notag\\
    &\approx 1-p. \notag
\end{align}
We denote the upper bound of ratio of reduced complexity to complexity of prior algorithms as $C=1-p=(1+2/r)^2\sigma^2$.

\textbf{Analysis}~~By Equation (\ref{lem:conv-beta}) and Lemma \ref{lem:kx}, $C$ can be expressed as
\begin{equation}
    \label{eq:complexity-C}
    C \leq \sigma^2\left(1+2\cdot\frac{(1-2x)(1+\sigma)+(1+\sigma k^*\sqrt{n})x}{\mu/L-(3+\sigma k^*\sqrt{n})x}\right)^2,
\end{equation}
where $x=f/n$ is the fault-tolerance factor.

As Equation (\ref{eq:complexity-C}) shows, the ratio $\C$ is related to four non-trivial variables: (i) bound of variance $\sigma\geq 0$; (ii) resilience $x=f/n$ satisfying the assumption in Lemma \ref{lem:conv-beta}, i.e.,
\[
\mu/L - (3+\sigma k^*\sqrt{n})x > 0;
\]
(iii) constant $L/\mu$, which is determined by the cost function $Q$ and satisfies $0<L/\mu<1$ by Lemma \ref{lem:conv-1}; and (iv) number of workers $n>0$. 

\begin{figure}[]
    \centering
    \begin{subfigure}[b]{0.45\textwidth}
        \centering
        \begin{tikzpicture}[scale=0.75, label distance=2em]
            \begin{axis}[
                axis lines = left, xlabel = $\sigma$, ylabel = $\C$, ymin=0, ymax=1, xtick={0,0.04,0.08,0.12,0.16},
            ]
                \addplot[domain=0:0.2,color=blue] {x^2 * ( 1 + 2 * ((1-2*0.1)*(1+x)+(1+1.12*x*sqrt(100))*0.1) / (1-(3+1.12*x*sqrt(100))*0.1) )^2}; 
            \end{axis}
        \end{tikzpicture}
        \caption{$C$ as a function of $\sigma$, for fixed $\mu/L=1$, $x=0.1$, and $n=100$.}
        \label{fig:C-vs-sigma}
    \end{subfigure}
    \hfill
    \begin{subfigure}[b]{0.45\textwidth}
        \centering
        \begin{tikzpicture}[scale=0.75]
            \begin{axis}[
                axis lines = left, xlabel = $\mu/L$, ylabel = $\C$, ymin=0, ymax=1, xmax=1,
            ]
                \addplot[domain=0.5:1,color=blue] {0.1^2 * ( 1 + 2 * ((1-2*0.1)*(1+0.1)+(1+1.12*0.1*sqrt(100))*0.1) / (x-(3+1.12*0.1*sqrt(100))*0.1) )^2}; 
            \end{axis}
        \end{tikzpicture}
        \caption{$C$ as a function of $\mu/L$, for fixed $\sigma=0.1$, $x=0.1$, and $n=100$.}
        \label{fig:C-vs-mu/L}
    \end{subfigure}
    \begin{subfigure}[b]{0.45\textwidth}
        \centering
        \begin{tikzpicture}[scale=0.75]
            \begin{axis}[
                axis lines = left, xlabel = $x$, ylabel = $\C$, ymin=0, ymax=1, xmin=0,
            ]
                \addplot[domain=0:0.2,color=blue] {0.1^2 * ( 1 + 2 * ((1-2*x)*(1+0.1)+(1+1.12*0.1*sqrt(100))*x) / (1-(3+1.12*0.1*sqrt(100))*x) )^2}; 
            \end{axis}
        \end{tikzpicture}
        \caption{$C$ as a function of $x$, for fixed $\sigma=0.1$, $\mu/L=1$, and $n=100$.}
        \label{fig:C-vs-x}
    \end{subfigure}
    \hfill
    \begin{subfigure}[b]{0.45\textwidth}
        \centering
        \begin{tikzpicture}[scale=0.75]
            \begin{axis}[
                axis lines = left, xlabel = $n$, ylabel = $\C$, ymin=0, ymax=1, xtick={0,100,200,300,400,500}, xmin=0, xmax=500,
            ]
                \addplot[domain=0:500,color=blue] {0.1^2 * ( 1 + 2 * ((1-2*0.1)*(1+0.1)+(1+1.12*0.1*sqrt(x))*0.1) / (1-(3+1.12*0.1*sqrt(x))*0.1) )^2}; 
            \end{axis}
        \end{tikzpicture}
        \caption{$C$ as a function of $n$, for fixed $\sigma=0.1$, $\mu/L=1$, and $x=0.1$.}
        \label{fig:C-vs-n}
    \end{subfigure}
\end{figure}

We first plot the relation between one factor and $C$ while fixing the other three factors. 
First, we present the most significant fact, $\sigma$. 
We fix $\mu/L=1$, $x=0.1$, and $n=100$. As Figure \ref{fig:C-vs-sigma} shows, $C$ increases in an almost quadratic speed with $\sigma$ because of the $\sigma^2$ term in Equation (\ref{eq:complexity-C}). Therefore, our algorithm is guaranteed to have lower communication complexity when the variance of gradients is relatively low, especially when $\sigma\leq 0.1$. In practice, this is the scenario when the data set consists mainly of similar data instances. 

Then, we plot $C$ against $\mu/L$ with fixed $\sigma=0.1$, $x=0.1$, and $n=100$. As Figure \ref{fig:C-vs-mu/L} shows, $C$ decreases as $\mu/L$ becomes closer to $1$. As $\mu/L>0.75$, $C<0.5$, meaning that $[0.75,1]$ is the range of $\mu/L$ where our algorithm is guaranteed to perform significantly better.

Next, we plot $C$ against $x$ with fixed $\sigma=0.1$, $\mu/L$, and $n=100$. As Figure \ref{fig:C-vs-x} shows, there is a trade-off between $C$ and fault resilience $x$. As $x$ approaches the max resilience defined in Lemma \ref{lem:conv-beta}, i.e., $x_{\max}=\frac{\mu/L}{(3+\sigma k^*\sqrt{n}}$, the theoretical upper bound $C$ blows up. Moreover, as $x<0.15$, $C<0.4$; and thus $[0,0.15]$ is a proper range of $x$.

Finally, we plot $C$ against $n$ with fixed $\sigma=0.1$, $\mu/L=1$, and $x=0.1$. As Figure \ref{fig:C-vs-n} shows, $C$ increases almost linearly with respect to $n$ with a relatively flat slope. In other words, $n$ is \textit{not} a significant factor of $C$; and the performance of our algorithm is stable in a wide range of $n$.

In conclusion, our algorithm is guaranteed to require lower communication complexity when: (i) $\sigma$ is low, i.e., data instances are similar and (ii) $\mu/L$ is close to $1$. Also, there is a trade-off between resilience and efficiency. As a concrete example, when $\sigma=0.1$, $x=0.2$, $\mu/L=1$, and $n=100$, $C\approx0.25$, meaning that our  algorithm is guaranteed to save at least $75\%$ of communication cost.

\section{Summary}

In this paper, we present our Byzantine-tolerant DML algorithm that incurs lower communication complexity in a single-hop radio netowrk (under suitable conditions). Our algorithm is inspired by the CGC filter \cite{vaidya-2f-redundancy}, but we need to devise new proofs to handle the randomness and noise introduced in our mechanism.

There are two interesting open problems: (i) multi-hop radio network; and (ii) different mechanism for constructing echo messages, e.g., usage of angles rather than distance ratio.

\bibliography{paperlist}

\begin{thebibliography}{10}

\bibitem{Broadcast_SPAA10}
Dan Alistarh, Seth Gilbert, Rachid Guerraoui, Zarko Milosevic, and Calvin
  Newport.
\newblock Securing every bit: Authenticated broadcast in radio networks.
\newblock In {\em Proceedings of the Twenty-Second Annual ACM Symposium on
  Parallelism in Algorithms and Architectures}, SPAA '10, page 50–59, New
  York, NY, USA, 2010. Association for Computing Machinery.
\newblock URL: \url{https://doi.org/10.1145/1810479.1810489}, \href
  {http://dx.doi.org/10.1145/1810479.1810489}
  {\path{doi:10.1145/1810479.1810489}}.

\bibitem{JAM_PODC08}
Baruch Awerbuch, Andrea Richa, and Christian Scheideler.
\newblock A jamming-resistant mac protocol for single-hop wireless networks.
\newblock In {\em Proceedings of the Twenty-Seventh ACM Symposium on Principles
  of Distributed Computing}, PODC '08, page 45–54, New York, NY, USA, 2008.
  Association for Computing Machinery.
\newblock URL: \url{https://doi.org/10.1145/1400751.1400759}, \href
  {http://dx.doi.org/10.1145/1400751.1400759}
  {\path{doi:10.1145/1400751.1400759}}.

\bibitem{Broadcast_Nitin_PODC05}
Vartika Bhandari and Nitin~H. Vaidya.
\newblock On reliable broadcast in a radio network.
\newblock In {\em Proceedings of the Twenty-Fourth Annual ACM Symposium on
  Principles of Distributed Computing}, PODC '05, page 138–147, New York, NY,
  USA, 2005. Association for Computing Machinery.
\newblock URL: \url{https://doi.org/10.1145/1073814.1073841}, \href
  {http://dx.doi.org/10.1145/1073814.1073841}
  {\path{doi:10.1145/1073814.1073841}}.

\bibitem{Rachid_Krum}
Peva Blanchard, El~Mahdi El~Mhamdi, Rachid Guerraoui, and Julien Stainer.
\newblock Machine learning with adversaries: Byzantine tolerant gradient
  descent.
\newblock NIPS'17, page 118–128, Red Hook, NY, USA, 2017. Curran Associates
  Inc.

\bibitem{convex-optimization}
Stephen Boyd and Lieven Vandenberghe.
\newblock {\em Convex Optimization}.
\newblock Cambridge University Press, USA, 2004.

\bibitem{Su_BGD}
Yudong Chen, Lili Su, and Jiaming Xu.
\newblock Distributed statistical machine learning in adversarial settings:
  Byzantine gradient descent.
\newblock {\em Proc. ACM Meas. Anal. Comput. Syst.}, 1(2), December 2017.
\newblock URL: \url{https://doi.org/10.1145/3154503}, \href
  {http://dx.doi.org/10.1145/3154503} {\path{doi:10.1145/3154503}}.

\bibitem{Rachid_Kardam}
Georgios Damaskinos, El~Mahdi El~Mhamdi, Rachid Guerraoui, Rhicheek Patra, and
  Mahsa Taziki.
\newblock Asynchronous {B}yzantine machine learning (the case of {SGD}).
\newblock volume~80 of {\em Proceedings of Machine Learning Research}, pages
  1145--1154, Stockholmsmässan, Stockholm Sweden, 10--15 Jul 2018. PMLR.
\newblock URL: \url{http://proceedings.mlr.press/v80/damaskinos18a.html}.

\bibitem{Rachid_genuine_BML}
El-Mahdi El-Mhamdi, Rachid Guerraoui, Arsany Guirguis, L\^{e}~Nguy\^{e}n Hoang,
  and S\'{e}bastien Rouault.
\newblock Genuinely distributed byzantine machine learning.
\newblock In {\em Proceedings of the 39th Symposium on Principles of
  Distributed Computing}, PODC '20, page 355–364, New York, NY, USA, 2020.
  Association for Computing Machinery.
\newblock URL: \url{https://doi.org/10.1145/3382734.3405695}, \href
  {http://dx.doi.org/10.1145/3382734.3405695}
  {\path{doi:10.1145/3382734.3405695}}.

\bibitem{Medical_ML}
J.~{Fan} and J.~{Lv}.
\newblock A selective overview of variable selection in high dimensional
  feature space.
\newblock {\em Statistica Sinica}, pages 101 -- 148, 01 2010.

\bibitem{Gumbel-order-stats1}
E.~J. Gumbel.
\newblock The maxima of the mean largest value and of the range.
\newblock {\em The Annals of Mathematical Statistics}, 25(1):76--84, 1954.
\newblock URL: \url{http://www.jstor.org/stable/2236513}.

\bibitem{vaidya-2f-redundancy}
Nirupam Gupta and Nitin~H. Vaidya.
\newblock Fault-tolerance in distributed optimization: The case of redundancy.
\newblock In {\em Proceedings of the 39th Symposium on Principles of
  Distributed Computing}, PODC '20, page 365–374, New York, NY, USA, 2020.
  Association for Computing Machinery.
\newblock URL: \url{https://doi.org/10.1145/3382734.3405748}, \href
  {http://dx.doi.org/10.1145/3382734.3405748}
  {\path{doi:10.1145/3382734.3405748}}.

\bibitem{Hartley-David-OrderStats2}
H.~O. Hartley and H.~A. David.
\newblock Universal bounds for mean range and extreme observation.
\newblock {\em The Annals of Mathematical Statistics}, 25(1):85--99, 1954.
\newblock URL: \url{http://www.jstor.org/stable/2236514}.

\bibitem{Wireless_FL_Chiang_2020}
Seyyedali Hosseinalipour, Christopher~G. Brinton, Vaneet Aggarwal, Huaiyu Dai,
  and Mung Chiang.
\newblock From federated learning to fog learning: Towards large-scale
  distributed machine learning in heterogeneous wireless networks, 2020.
\newblock \href {http://arxiv.org/abs/2006.03594} {\path{arXiv:2006.03594}}.

\bibitem{Broadcast_Koo_PODC04}
Chiu-Yuen Koo.
\newblock Broadcast in radio networks tolerating byzantine adversarial
  behavior.
\newblock In {\em Proceedings of the Twenty-Third Annual ACM Symposium on
  Principles of Distributed Computing}, PODC '04, page 275–282, New York, NY,
  USA, 2004. Association for Computing Machinery.
\newblock URL: \url{https://doi.org/10.1145/1011767.1011807}, \href
  {http://dx.doi.org/10.1145/1011767.1011807}
  {\path{doi:10.1145/1011767.1011807}}.

\bibitem{NIPS_Li_2014}
Mu~Li, David~G. Andersen, Alexander Smola, and Kai Yu.
\newblock Communication efficient distributed machine learning with the
  parameter server.
\newblock In {\em Proceedings of the 27th International Conference on Neural
  Information Processing Systems - Volume 1}, NIPS'14, page 19–27, Cambridge,
  MA, USA, 2014. MIT Press.

\bibitem{ACM_Survey_DL_2020}
Ruben Mayer and Hans-Arno Jacobsen.
\newblock Scalable deep learning on distributed infrastructures: Challenges,
  techniques, and tools.
\newblock {\em ACM Comput. Surv.}, 53(1), February 2020.
\newblock URL: \url{https://doi.org/10.1145/3363554}, \href
  {http://dx.doi.org/10.1145/3363554} {\path{doi:10.1145/3363554}}.

\bibitem{JAM_Infocom07}
V.~{Navda}, A.~{Bohra}, S.~{Ganguly}, and D.~{Rubenstein}.
\newblock Using channel hopping to increase 802.11 resilience to jamming
  attacks.
\newblock In {\em IEEE INFOCOM 2007 - 26th IEEE International Conference on
  Computer Communications}, pages 2526--2530, 2007.

\bibitem{Papdatos-max-variance-OS}
Nickos Papadatos.
\newblock Maximum variance of order statistics.
\newblock {\em Annals of the Institute of Statistical Mathematics},
  47:185--193, 02 1995.
\newblock \href {http://dx.doi.org/10.1007/BF00773423}
  {\path{doi:10.1007/BF00773423}}.

\bibitem{PERLMAN197452}
Michael~D. Perlman.
\newblock Jensen's inequality for a convex vector-valued function on an
  infinite-dimensional space.
\newblock {\em Journal of Multivariate Analysis}, 4(1):52 -- 65, 1974.
\newblock URL:
  \url{http://www.sciencedirect.com/science/article/pii/0047259X74900050},
  \href {http://dx.doi.org/https://doi.org/10.1016/0047-259X(74)90005-0}
  {\path{doi:https://doi.org/10.1016/0047-259X(74)90005-0}}.

\bibitem{vaidya_optimization_PODC16}
Lili Su and Nitin~H. Vaidya.
\newblock Fault-tolerant multi-agent optimization: Optimal iterative
  distributed algorithms.
\newblock In George Giakkoupis, editor, {\em Proceedings of the 2016 {ACM}
  Symposium on Principles of Distributed Computing, {PODC} 2016, Chicago, IL,
  USA, July 25-28, 2016}, pages 425--434. {ACM}, 2016.
\newblock URL: \url{https://doi.org/10.1145/2933057.2933105}, \href
  {http://dx.doi.org/10.1145/2933057.2933105}
  {\path{doi:10.1145/2933057.2933105}}.

\bibitem{vaidya_nonBayesian_learning}
Lili Su and Nitin~H. Vaidya.
\newblock Non-bayesian learning in the presence of byzantine agents.
\newblock In {\em Distributed Computing - 30th International Symposium, {DISC}
  2016, Paris, France, September 27-29, 2016. Proceedings}, pages 414--427,
  2016.
\newblock URL: \url{https://doi.org/10.1007/978-3-662-53426-7\_30}, \href
  {http://dx.doi.org/10.1007/978-3-662-53426-7\_30}
  {\path{doi:10.1007/978-3-662-53426-7\_30}}.

\bibitem{IEEE_Survey_ML_Wireless}
Y.~{Sun}, M.~{Peng}, Y.~{Zhou}, Y.~{Huang}, and S.~{Mao}.
\newblock Application of machine learning in wireless networks: Key techniques
  and open issues.
\newblock {\em IEEE Communications Surveys Tutorials}, 21(4):3072--3108, 2019.

\bibitem{HotEdge_Tao_2018}
Zeyi Tao and Qun Li.
\newblock esgd: Communication efficient distributed deep learning on the edge.
\newblock In {\em {USENIX} Workshop on Hot Topics in Edge Computing (HotEdge
  18)}, Boston, MA, July 2018. {USENIX} Association.
\newblock URL:
  \url{https://www.usenix.org/conference/hotedge18/presentation/tao}.

\bibitem{ACM_Survey_DML_2020}
Joost Verbraeken, Matthijs Wolting, Jonathan Katzy, Jeroen Kloppenburg, Tim
  Verbelen, and Jan~S. Rellermeyer.
\newblock A survey on distributed machine learning.
\newblock {\em ACM Comput. Surv.}, 53(2), March 2020.
\newblock URL: \url{https://doi.org/10.1145/3377454}, \href
  {http://dx.doi.org/10.1145/3377454} {\path{doi:10.1145/3377454}}.

\bibitem{Xie_Zeno}
Cong Xie, Sanmi Koyejo, and Indranil Gupta.
\newblock Zeno: Distributed stochastic gradient descent with suspicion-based
  fault-tolerance.
\newblock volume~97 of {\em Proceedings of Machine Learning Research}, pages
  6893--6901, Long Beach, California, USA, 09--15 Jun 2019. PMLR.
\newblock URL: \url{http://proceedings.mlr.press/v97/xie19b.html}.

\end{thebibliography}

\newpage
\appendix

\section{Proof of Lemmas in Section \ref{s:rho}}
\label{app:rho}
\subsection{Proof of Lemma \ref{lem:conv-1}}

\begin{proof}
    Cauchy inequality and $L$-Lipschitz smoothness of cost function imply that
    \begin{equation}
        \label{eq-rho1}
        \dotprod{w-w^*}{\nabla Q(w)} \leq \norm{w-w^*}\norm{\nabla Q(w)} \leq L\norm{w-w^*}^2, \enspace \forall w\in\RR^d.
    \end{equation}
    Also by strong convexity,
    \begin{equation}
        \label{eq-rho2}
        \dotprod{w-w^*}{\nabla Q(w)} \geq 
        \mu\norm{w-w^*}^2, \enspace \forall w\in\RR^d.
    \end{equation}
    Equation (\ref{eq-rho1}) and (\ref{eq-rho2}) together imply that $\mu\leq L$.
\end{proof}

\subsection{Proof of Lemma \ref{lem:kx}}

\begin{proof}
    First, we check the derivative of $k_x/\sqrt{x}$:
    \begin{equation}
        \frac{d}{dx}\frac{k_x}{\sqrt{x}} = -\frac{(2x-1)^{3/2}-3x+1}{2(2x^2-x)^{3/2}}. \notag
    \end{equation}
    We can then verify numerically that $k_x/\sqrt{x}$ reaches maximum at $\approx1.91$ and that $k^*=\sup_{x\geq 1}(k_x/\sqrt{x})\approx 1.12$.
\end{proof}

\subsection{Proof of Lemma \ref{lem:conv-beta}}

\begin{proof}
    By definition of $k_x$, $k_x=1+\frac{x-1}{\sqrt{2x-1}}$ is increasing for $x\geq 1$. Recall that $h\leq n$ and $b\leq f$, so $k_h\leq k_n$. By definition of $\beta$, this further implies that
    \begin{equation}
        \beta \geq (n-2f)\frac{\mu-r(1+\sigma)L}{1+r}-fL(1+k_n\sigma). \notag
    \end{equation}
    Therefore, $\beta>0$ if
    \begin{align}
        &(n-2f)\left(\mu-r(1+\sigma)L\right) > (1+r)(1+k_n\sigma)fL \notag\\
        \iff & (n-2f)\mu-(1+k_n\sigma)fL > \left( (n-2f)(1+\sigma)L+(1+k_n\sigma)fL \right) r. \notag
    \end{align}
    Recall that by Lemma \ref{lem:conv-1}, $\mu\leq L$, so the above inequality is true if
    \begin{equation}
        \label{eq:app-r-sufficient}
        n\mu - (3+k_n\sigma)fL > \left( (n-2f)(1+\sigma)L+(1+k_n\sigma)fL \right) r.
    \end{equation}
    Since we assume $n-2f>0$, the right side is always positive. Therefore, if $n\mu-(3+k_n\sigma)fL>0$, then there exists $r>0$ that satisfies Equation (\ref{eq:app-r-sufficient}); also such $r$ satisfies $\beta>0$. This proves the lemma.
\end{proof}

\subsection{Proof of Lemma \ref{lem:conv-strict-beta}}

\begin{proof}
    By Lemma \ref{lem:kx}, $k_n\leq k^*\sqrt{n}$. Given the assumption that $\sigma<1/\sqrt{n}$, 
    \[
    \frac{n\mu - (3+k_n\sigma)fL}{(n-2f)(1+\sigma)L+(1+k_n\sigma)fL} > \frac{n\mu - (3+k^*)fL}{(n-2f)(1+\sigma)L+(1+k^*)fL}.
    \]
    Therefore, if $r>0$ satisfies Equation (\ref{eq-deviation-ratio}), then it also satisfies Equation (\ref{eq-beta-2}); and by Lemma \ref{lem:conv-strict-beta}, such $r$ satisfies $\beta>0$.
\end{proof}

\subsection{Proof of Theorem \ref{thm:conv-rate}}

\begin{proof}
    First observe that $\rho$ can be represented as a quadratic function of $\eta$, i.e., $\rho(\eta)=\gamma\eta^2-2\beta\eta+1$. We prove the theorem by providing the bounds on this function.
    
    By definition of $\alpha_x$ in Equation (\ref{eq-alpha}), $\alpha_h>0$ for all $h\geq 1$. Thus, by definition of $\gamma$ in Equation (\ref{eq-gamma}), $\gamma>0$. Also, by Lemma \ref{lem:conv-strict-beta}, for $r>0$ that satisfies Equation (\ref{eq-deviation-ratio}), $\beta>0$.
    
    Recall that for a quadratic function $q(x)=ax^2-bx+1$, where $a,b>0$, $q(x)$ reaches minimum at $x^*=b/2a$. Moreover, $q(0)=q(2x^*)$; and for all $x\in(0,2x^*)$, $q(x)\in[q(x^*),q(0))$. In our case, $a=\gamma$ and $b=2\beta$. Therefore, the minimum occurs at $\eta^*=\beta/\gamma$, and $\eta^*>0$.
    Since $\rho(0)=1$, for all $\eta\in(0,2\eta^*)$, $\rho(\eta)\in[\rho(\eta^*),1)$. 
    
    The remaining task is to compute the minimum value $\rho(\eta^*)$. Upon substituting $\eta^*=\beta/\gamma$ into $\rho(\eta)=\gamma\eta^2-2\beta\eta+1$, we obtain that $\rho(\eta^*)=1-\beta^2/\gamma$.

    Recall the definition of $\beta$ and $\gamma$ in Equation (\ref{eq-beta}, \ref{eq-gamma}), we can first see that $\beta\leq n\mu$. Also, since $\alpha_h\geq 1$, $\gamma\geq nL^2(h+b)=n^2L^2$. Finally, by Lemma \ref{lem:conv-1}, $\mu\leq L$. These imply that $\beta^2\leq n^2\mu^2\leq n^2L^2 \leq \gamma$; and thus $\rho(\eta^*)=1-\beta^2/\gamma>0$.
    
    In conclusion, for all $\eta>0$ such that $\eta<2\eta^*$, $\rho\in[0,1)$. This proves the theorem.
\end{proof}

\section{Proof of Lemmas in Section \ref{s:converge}}
\label{app:converge}

\subsection{Proof of Lemma \ref{lem:eq-os-4}}

\begin{proof}
    Consider any $j\in\h$. First, observe that we have
    \begin{align}
        \label{eq-os-1}
        \EE(\norm{g_j^t}) &= \EE(\norm{g_j^t-\nabla Q(w^t)+\nabla Q(w^t)}) \notag\\
        &\leq \EE\norm{g_j^t-\nabla Q(w^t)}+\EE\norm{\nabla Q(w^t)}.
    \end{align}
    By Jensen's inequality, for any random variable $X$, $\EE(X)^2\leq \EE(X^2)$. In our case, upon substituting $X=\norm{g_j^t-\nabla Q(w^t)}$, we obtain that
    \begin{equation}
        \label{eq-os-2}
        \EE\norm{g_j^t-\nabla Q(w^t)} \leq \sqrt{\EE\norm{g_j^t-\nabla Q(w^t)}^2}. \notag 
    \end{equation}
    By Assumption \ref{as:bounded-variance}, this further implies that
    \begin{equation}
        \label{eq-os-3}
        \EE\norm{g_j^t-\nabla Q(w^t)} \leq \sqrt{\sigma^2\norm{\nabla Q(w^t)}^2} = \sigma\norm{\nabla Q(w^t)}. \notag
    \end{equation}
    Upon substituting this into Equation (\ref{eq-os-1}), we obtain that
    \begin{equation}
        \EE\norm{g_j^t} \leq (1+\sigma)\norm{\nabla Q(w^t)}. \notag
    \end{equation}   
    This completes the proof.
\end{proof}

\subsection{Proof of Lemma \ref{lem:order-stats}}
\label{app:Lemma-7}

\begin{proof}
    Gumbel \cite{Gumbel-order-stats1} and Hartley and David \cite{Hartley-David-OrderStats2} proved that given identical means and variances $(\mu,\sigma^2)$, the upper bound of the expectation of the largest random variable among $n$ independent random variables is $\mu+\frac{\sigma(n-1)}{\sqrt{2n-1}}$. In our model, $\{g_j^t:j\in\h\}$ can be viewed as a set of independent and identically distributed random vectors with expectation $\nabla Q(w^t)$. Recall that $\h$ is the set of fault-free workers. Therefore, $\{\norm{g_j^t}:j\in\h\}$ is also a set of iid random variables; and by our algorithm, the norm of the received gradients $\norm{\Tilde{g}_j^t}=\norm{g_j^t}$.
    
    Denote the mean and variance of each $\norm{g_j^t}$ as $(\epsilon,\delta^2)$, and denote $M$ as $M = \max_{j\in\h}\{\norm{g_j^t}\}$. Thus, \cite{Gumbel-order-stats1,Hartley-David-OrderStats2} imply that $\EE M\leq \epsilon+\frac{\delta(h-1)}{\sqrt{2h-1}}$. $\epsilon$ and $\delta$  are to be bounded next.
    
    Recall that Lemma \ref{lem:eq-os-4} gives an upper bound on $\EE\norm{g_j^t}$, i.e.,
    \begin{equation}
        \epsilon = \EE\norm{g_j^t} \leq (1+\sigma)\norm{\nabla Q(w^t)}. \notag
    \end{equation}
    Therefore, we only need to compute the upper bound of $\delta^2$. 
    Consider a random vector $X\in\RR^d$ and denote $\mu=\EE X$. It is proved that for any constant vector $v\in\RR^d$, $\EE\dotprod{v}{X}=\dotprod{v}{\EE X}$. Therefore,
    \begin{align}
        \label{eq:bound-EX^2}
        \EE\norm{X-\mu}^2 
        &= \EE\dotprod{X-\mu}{X-\mu} \notag\\
        &= \EE\dotprod{X}{X}-2\EE\dotprod{\mu}{X}+\EE\dotprod{
        \mu}{\mu} \notag\\
        &= \EE\norm{X}^2 - \norm{\mu}^2.
    \end{align}
    Perlman \cite{PERLMAN197452} proved the extended Jensen's Inequality to random vectors such that $\phi(\EE X)\leq \EE\phi(X)$ for any convex function $\phi$. Since $\norm{\cdot}$ is convex, this implies $\norm{\mu}\leq \EE\norm{X}$. Thus,
    \begin{align}
        \mathrm{Var}\norm{X} 
        &= \EE\norm{X}^2 - (\EE\norm{X})^2 \notag\\
        &\leq \EE\norm{X}^2 - \norm{\mu}^2 \notag\\
        &= \EE\norm{E-\mu}^2. \notag
    \end{align}
    Upon substituting $X$ with $g_j^t$, we obtain that
    \begin{align}
        \label{eq:delta-square}
        \delta^2 = \mathrm{Var}\norm{g_j^t}
        &\leq \EE\norm{g_j^t-\nabla Q(w^t)}^2 \notag\\
        &\leq \sigma^2\norm{\nabla Q(w^t)}^2. & \text{(Assumption \ref{as:bounded-variance})}
    \end{align}
    The upper bounds of $\epsilon$ and $\delta^2$ together induce the following bound on $\EE M$, which is
    \begin{align}
        \label{eq-os-8}
        \EE M
        &\leq \epsilon + \frac{h-1}{\sqrt{2h-1}}\delta
        \leq \left((1+\sigma)+\frac{h-1}{\sqrt{2h-1}}\sigma\right) \norm{\nabla Q(w^t)}. \notag
    \end{align}
    By definition of $k_x$ in Equation (\ref{eq-kn}), this can be simplified as
    \begin{equation}
        \label{eq-order-stat-abbrev}
        \EE M \leq (1+k_h\sigma)\norm{\nabla Q(w^t)}. \notag
    \end{equation}

    Finally, recall the definition of the CGC filter in Equation (\ref{eq-CGC}). Since $|\h|\geq n-f$, it is guaranteed that one fault-free received gradient will be clipped. This implies that the threshold of the CGC filter $\norm{\Tilde{g}_{i_{n-f}}^t}$ satisfies $\norm{\Tilde{g}_{i_{n-f}}^t}\leq M$. Since the CGC filter guarantees that $\norm{\hat{g}_j^t}\leq\norm{\Tilde{g}_{i_{n-f}}^t}$ for all $j$, together we have $\EE\norm{\hat{g}_j^t}\leq \EE\norm{\Tilde{g}_{i_{n-f}}^t}\leq \EE M$, which proves the lemma.
\end{proof}

\subsection{Proof of Lemma \ref{lem:order-stat2}}

\begin{proof}
    Papadatos \cite{Papdatos-max-variance-OS} proved that for $n$ iid random variables $X_1\leq X_2\leq \dotsm\leq X_n$ with finite variance $\sigma^2$, the maximum variance of $X_n$ is bounded above by $n\sigma^2$. In the same setup as Section \ref{app:Lemma-7} that $M=\max\{\norm{g_j^t}:j\in\h\}$, Papadatos' result together with Equation (\ref{eq:delta-square}) imply that
    \begin{equation}
        \mathrm{Var}M \leq h\mathrm{Var}\norm{g_j^t} \leq h\sigma^2\norm{\nabla Q(w^t)}^2. \notag
    \end{equation}
    Recall that for a random variable $Y$, $\mathrm{Var}Y=\EE Y^2-(\EE Y)^2$. Therefore,
    \begin{align}
        \EE M^2 &= \mathrm{Var}M+(\EE M)^2 
        \leq (h\sigma^2+(1+k_h\sigma)^2)\norm{\nabla Q(w^t)}^2. \notag
    \end{align}
    Recall the definition of $\alpha_x$ in Equation (\ref{eq-alpha}). This implies
    \[
    \EE M \leq \alpha_h \norm{\nabla Q(w^t)}^2.
    \]
    Finally, in Section \ref{app:Lemma-7} we proved $\norm{\hat{g}_j^t}\leq M$ for all $j$, so $\norm{\hat{g}_j^t}^2\leq M^2$ and $\EE\norm{\hat{g}_j^t}^2\leq \EE M^2$ for all $j$, which proves the lemma.
\end{proof}

\subsection{Part C in Proof of Theorem \ref{thm:convergent}}
\label{app:part-C}

\begin{proof}
\noindent\textbf{Part $C$}: By definition of $g^t$ and convexity of $\norm{\cdot}^2$,
    \begin{align}
        \norm{g^t}^2 = \left\lVert\sum_{j=1}^n\hat{g}_j^t\right\rVert^2
        &\leq n\sum_{j=1}^n\norm{\hat{g}_j^t}^2 & \text{($\norm{\cdot}^2$ is convex)} \notag\\
        &= n\sum_{j\in\h}\norm{\hat{g}_j^t}^2+n\sum_{j\in\B}\norm{\hat{g}_j^t}^2. \notag
    \end{align}
    By definition of $\hat{g}_j^t$, $\norm{\hat{g}_j^t}\leq\norm{g_j^t}$ for all $j\in\h$. Also, by Equation (\ref{eq:bound-EX^2}), for each fault-free worker $j\in\h$,
    \begin{align}
        \label{eq-C2}
        \EE\norm{\hat{g}_j^t} \leq \EE\norm{g_j^t}^2
        &\leq \EE\norm{g_j^t-\nabla Q(w^t)}^2+\norm{\nabla Q(w^t)}^2 \notag\\
        &\leq \sigma^2\norm{\nabla Q(w^t)}^2 + \norm{\nabla Q(w^t)}^2.
    \end{align}
    By Lemma \ref{lem:order-stat2}, for each Byzantine worker $j\in\B$,
    \begin{equation}
        \label{eq-C2.5}
        \EE\norm{\hat{g}_j^t}^2 \leq \alpha_h\norm{\nabla Q(w^t)}^2.
    \end{equation}
    Recall the $L$-Lipschitz assumption (Assumption \ref{as:lipschitz}) that $\norm{\nabla Q(w^t)}\leq L\norm{w^t-w^*}$.
    Hence, upon combining Equation (\ref{eq-C2}) and (\ref{eq-C2.5}), we obtain that
    \begin{align*}
        \EE\norm{g^t}^2 
        &\leq nh(1+\sigma^2)\norm{\nabla Q(w^t)}^2 + nb\alpha_h\norm{\nabla Q(w^t)}^2 \\
        &\leq nL^2\left(h(1+\sigma^2)+b\alpha_h\right)\norm{w^t-w^*}^2.
    \end{align*}
    Recall the definition of $\gamma$ in Equation (\ref{eq-gamma}). This proves part C of Theorem \ref{thm:convergent}.
\end{proof}

\section{Proof in Section \ref{s:communication}}
\label{app:communication}

\subsection{Proof of Lemma \ref{lem:ball}}

\begin{proof}
    First, by triangular inequality, for all $u,v\in B$,
    \begin{align}
        \label{eq-lemcomp-1}
        \norm{u-v} &\leq \norm{u-\nabla Q(w^t)} + \norm{v-\nabla Q(w^t)}.
    \end{align}
    Again by triangular inequality and the radius of ball $B$, for each $u\in B$,
    \begin{align}
        \norm{\nabla Q(w^t)}-\norm{u} \leq \norm{u-\nabla Q(w^t)} \leq \frac{r}{2+r}\norm{\nabla Q(w^t)}. \notag
    \end{align}
    This is equivalent to say
    \begin{equation}
        \left(1-\frac{r}{2+r}\right)\norm{\nabla Q(w^t)} \leq \norm{u}
        \iff \norm{\nabla Q(w^t)} \leq \frac{2+r}{2}\norm{u}.
        \notag
    \end{equation}
    Therefore, 
    \begin{equation}
        \norm{u-\nabla Q(w^t)} \leq \frac{r}{2+r}\norm{\nabla Q(w^t)} \leq \frac{r}{2}\norm{u}, \enspace \forall u\in B. \notag
    \end{equation}
    Finally, upon substituting this into Equation (\ref{eq-lemcomp-1}), we prove the lemma.
\end{proof}

\section{Proof of existence of M-P inverse}
\label{appendix-MP-inverse}

\begin{proof}
    First of all, recall that for a matrix $A\in\RR^{m\times n}$ where $m>n$, if $A$ has full column rank, then $A^TA$ is invertible. 

    Consider arbitrary $j,t$. First we can prove by induction that all gradients $g_i^t$ stored in $R_j$ are linearly independent. Base case is true when $|R_j|=1$. Now assume that when $|R_j|=k$, $R_j$ is linearly independent. Denote $A$ as the matrix generated by $R_j$, i.e., $A=[g]_{g\in R_j}$. Then $A^TA$ is invertible and M-P inverse of $A$ exists, i.e., $A^+$ in line \ref{line:linear-independent} of Algorithm \ref{alg:echo-CGC} is well-defined. 
    
    Now suppose that $j$ receives a gradient $g$ and stores it to $R_j$. Then $g$ must pass the condition in line \ref{line:linear-independent}, i.e., $AA^+g\neq g$. Suppose by contradiction that $g$ is linearly dependent of $R_j$, then there exists $x\in\RR^{k}$ such that $g=Ax$. Note that $A^+A=I$, the identity matrix. This implies that $AA^+g=AA^+(Ax)=Ax=g$, which is a contradiction. Hence, $g$ is independent of $R_j$.

    In the induction proof, we already showed that $A^+$ always exists because $R_j$ is always linearly independent. This proves that the M-P inverse always exists.
\end{proof}

\end{document}